\newcommand{\red}[1]{{\color{red}{#1}}}
\newcommand{\blue}[1]{{\color{blue}{#1}}}
\numberwithin{equation}{section}
\newcounter{mnotecount}[section]
\renewcommand{\themnotecount}{\thesection.\arabic{mnotecount}}
\newcommand{\mnote}[1]{\protect{\stepcounter{mnotecount}}${\raisebox{0.5\baselineskip}[0pt]{\makebox[0pt][c]{\color{magenta}{\tiny\em$\bullet$\themnotecount}}}}$\marginpar{\raggedright\tiny\em$\!\!\!\!\!\!\,\bullet$\themnotecount: #1}\ignorespaces}
\newcommand{\mymnote}[1]{\mnote{\blue{#1}}}
\renewcommand{\mymnote}[1]{}
\renewcommand{\Re}{\mathbb R}
\newcommand{\tr}{\text{tr}}
\newcommand{\half}{\frac{1}{2}}
\newcommand{\la}{\langle}
\newcommand{\ra}{\rangle}
\newcommand{\eps}{\epsilon}
\newcommand{\Bo}{\mathcal B}
\newcommand{\Sp}{\mathcal S}
\newcommand{\id}{\mathbf{id}}
\newcommand{\LL}{\mathcal L}
\newcommand{\FF}{\mathcal F}
\newcommand{\PP}{\mathbb P}
\newcommand{\bodymetric}{b}
\newcommand{\spacemetric}{g}
\newcommand{\mr}{\mathring}
\newcommand{\pfluid}{p}
\newcommand{\Mass}{\mathcal M}
\theoremstyle{plain}
\newtheorem{thm}{Theorem}[section]
\newtheorem{lemma}[thm]{Lemma}
\newtheorem{example}[thm]{Example}
\newtheorem{definition}[thm]{Definition}
\newtheorem{prop}[thm]{Proposition}
\newtheorem{remark}{Remark}[section]
\title{Elastic deformations of compact stars}
\author[L. Andersson]{Lars Andersson${}^\dagger$} \email{laan@aei.mpg.de}
\address{Albert Einstein Institute, Am M\"uhlenberg 1, D-14476 Potsdam,
  Germany}
\author[R. Beig]{Robert Beig${}^{\ddagger}$
} \email{robert.beig@univie.ac.at}
\address{Gravitational Physics, Faculty of Physics, University of
Vienna,
\newline
Boltzmanngasse 5, A-1090 Vienna, Austria}
\thanks{${}^{\ddagger}$ Supported in part by Fonds zur F\"orderung der
Wissenschaftlichen Forschung project no. P20414-N16.}
\author[B. Schmidt]{Bernd G. Schmidt} \email{bernd@aei.mpg.de}
\address{Max-Planck-Institut f\"ur Gravitationsphysik,
Albert-Einstein-Institut,
\newline
Am M\"uhlenberg 1, D-14476 Golm, Germany}
\begin{document}
\date{February 26, 2014}
\begin{abstract}
We prove existence of solutions for an elastic body interacting with itself through its Newtonian
gravitational field. Our construction works for configurations near one given by a self-gravitating ball of perfect fluid. We use an
implicit function argument. In so doing we have to revisit some classical work in the astrophysical literature
concerning linear stability of perfect fluid stars. The results presented here  extend previous work by the authors,
which was restricted to the astrophysically insignificant situation of configurations near one of vanishing stress. In particular, ``mountains on neutron stars'', which are made possible by the presence of an elastic
crust in neutron stars, can be treated using the techniques developed here.
\end{abstract}

\maketitle

\section{Introduction} \label{sec:intro}
The matter models commonly used to describe stars in astrophysics  are fluids.
By a classical result of Lichtenstein \cite{lichtenstein},
a self-gravitating time-independent fluid body is spherically symmetric in the absence of rotation. However, due to the presence of a solid crust, it is possible for neutron stars to carry mountains (see \cite{crusts}). In view of the result of Lichtenstein, in order to describe something like ``mountains on a neutron star'' it is necessary to consider matter models which allow non-isotropic stresses,
for example those of elasticity theory.

The currently known existence results for non-spherical self-gravitating time-indepent elastic bodies deal with deformations of a relaxed stress-free state. In  \cite{beig:schmidt:2003} it was shown that for a small body for which a relaxed stress-free configuration exists, there is a unique nearby solution which describes the deformation of the body under its own Newtonian gravitational field. The analogous result in Einstein gravity was given in \cite{ABS}. However, a large self-gravitating object like a neutron stars does not admit a relaxed configuration. This is the situation which we shall consider in this paper.

In section \ref{sec:prel}, we give an outline of the formalism of elasticity used in this paper, following \cite{AOS:2011CQGra..28w5006A}, and specializing to the time-independent case. Given a reference body,  i.e., a domain
in three-dimensional Euclidean
body space
the physical body is its image
in
physical space
under a
deformation map.
The field theoretic description of the material in terms of deformation maps gives the material frame (or Lagrangian) form of the theory, while the description in terms of the inverses of the deformation maps, called
configuration
maps,
taking the physical body to the reference body, gives the Eulerian picture. The field equations for Newtonian elasticity can be derived from an action principle  in which the properties of the matter are defined by a ``stored energy function''
which depends on the configuration, and a Riemannian reference metric on the body, called the body metric. The body metric
is taken to be conformal to the Euclidean metric with the conformal factor chosen so that the associated volume element is that given by the particle number density of the reference configuration, see below.
This is done in section \ref{sec:newt}, working in the Eulerian setting. We consider here only isotropic materials, for which the stored energy depends on the configuration map only via the three principal invariants of the strain tensor\footnote{The strain tensor is the push-forward of the contravariant form of the metric on space, which in the present setting can be taken to be the Euclidean metric.} defined with respect to the body metric.

The resulting Euler-Lagrange equations for a self-gravitating elastic body form a quasi-linear, integro-differential boundary value problem for the configuration map, with an a priori unknown boundary given by the condition that
the normal stress be zero there.
In section \ref{sec:materialframe}, we reformulate the system in the material frame, as a system of equations on the reference body, i.e. a fixed domain in Euclidean space. The system of equations on the body given in section \ref{sec:materialframe}  replaces the free boundary of the physical body, by the fixed boundary of the reference body, with a Neumann type boundary condition. For a large class of stored energy functions we obtain by this procedure a quasilinear elliptic, integro-differential boundary value problem, the solutions of which represent self-gravitating time-independent elastic bodies.

There is no general existence theorem in the literature which can be applied to such systems, except in the spherically symmetric case.
A minimizer for the variational problem describing a Newtonian self-gravitating body has been shown to exist, under certain conditions, in  \cite{calogero:leonori:2012arXiv1208.1792C}.
However, it is unknown under what conditions this minimizer satisfies the Euler-Lagrange equations. 
Specializing to the spherically symmetric case, the field equations for a self-gravitating, time-independent body reduces to a system of ordinary differential equations, for which existence of solutions is easier to show. For example, in the case of perfect fluid matter, which we will use as background, as explained below,
there is a simple condition on the equation of state (see \cite{rendall:schmidt:1991CQGra...8..985R} for the relativistic case,  the Newtonian case is analogous) guaranteeing existence of solutions corresponding to a finite body. We shall in this paper construct ``large'' non-spherical time-independent self-gravitating bodies by applying the implicit function theorem to deform appropriate spherical background solutions.

We choose our reference configuration as follows, see sections \ref{sec:refconfig}, \ref{sec:statfluid} for details. As mentioned above, a Newtonian self-gravitating fluid body is spherically symmetric. In fact, a fluid body centered at the origin is determined by specifying its stored energy $e(\rho)$ (in the context of fluids usually referred to as internal energy) as a function of the physical density $\rho$ of the fluid, together with its central density, see again \cite{rendall:schmidt:1991CQGra...8..985R}.
The fluid density $\rho$ can be taken as one of the principal invariants of the strain tensor.

Keeping these facts in mind, a time-independent self-gravitating fluid body with support of finite radius, and non-vanishing density at the boundary, can be viewed as a special case of an elastic material, with deformation map given by the identity, and with spatial density $\mr{\rho}$. A body metric 
conformal to the Euclidean metric
is then chosen, which has volume element
equal to $\mathring\rho$. The two remaining principal invariants of the strain tensor are represented in terms of expressions $\tau, \delta$ which vanish at this background configuration, cf. \cite{shadi:elast} for related definitions.  We then choose a stored energy function
which coincides with $e(\rho)$ at the reference configuration, but which has elastic stresses for non-spherical configurations.
The just described elastic body is then taken as the starting point of an implicit function theorem argument.

Given the stored energy as a function of the principal invariants, the action and hence also the Euler-Lagrange equations are specified. We introduce a 1-parameter family of non-spherical body densities $\mr{\rho}_\lambda$ with $\mr{\rho}_0 = \mr{\rho}$, the spherically symmetric density of the reference body.
Introducing a suitable setting in terms of Sobolev spaces, see section \ref{sec:refbodydeform},
we can formulate the system of Euler-Lagrange equations in the material frame as a 1-parameter family of non-linear functional equations
$\FF[\mr{\rho}_\lambda, \phi_\lambda] = 0$.
Our problem is then reduced to showing that deformation maps $\phi_\lambda$ solving this equation exist for small $\lambda \ne 0$.

In order to apply the implicit function theorem to construct such solutions, we must analyze the Frechet derivative of the $\mathcal F$ or equivalently the equations linearized at the reference deformation. This is done in section \ref{sec:linop}, which forms the core of the paper. We start by calculating the linearized operator. The result is \eqref{eq:lambda-eqs}. With our choice of stored energy, the first variation of the action is ``pure fluid'' (which is why
the identity map is a solution of the Euler-Lagrange equations at the reference configuration), but the second variation, i.e. the Hessian of the action functional, evaluated at the reference configuration, or put differently, the linearization at the identity of the Euler-Lagrange system of equations,
does have an elastic contribution.
Proposition \ref{prop:alternative} gives an alternative way of writing the linearized operator which will be essential later.

On geometrical grounds the kernel of the linearized operator contains the Killing vectors of flat space, translations and rotations. For the application of the implicit function it is essential to show that there are no further elements in the kernel. This is done for radial and
nonradial perturbations separately.

Lemma \ref{lem:radial} gives sufficient conditions for the kernel to contain only the trivial radial perturbation. This is the
famous $3 \gamma-4>0$ condition, which guarantees the absence of linear time-harmonic radial fluid perturbations which grow
exponentially (see \cite{stellar} and historical references therein). Here $\gamma$ is the adiabatic index.
For nonradial perturbations proposition \ref{prop:nonradial} shows that there are just Killing vectors in the kernel. For fluids with vanishing boundary density the analog of this statement is known in the astrophysical literature as the Antonov--Lebovitz Theorem (see \cite{B&T:1987gady.book.....B} and references therein).
Our result shows in particular  that this theorem is also true for fluids with positive boundary density.
Theorem 5.6 states our result that the kernel of the linearized operator contains only Euclidean motions.

To use the implicit function theorem we have also to control the  cokernel. This is the content of proposition \ref{prop:fredholm} which gives a Fredholm alternative for the integro-differential boundary value problem under consideration. Here, an analysis of the regularity properties of the Newtonian potential, and the layer potential are needed. These results are given in appendix \ref{sec:newtest}. Again, it turns out that the cokernel consists of infinitesimal Euclidean motions.

Now all the necessary pieces are in place for applying the implicit function theorem in order to show the existence of a family of solutions in section \ref{sec:implicit}. First we define, following the approach used in \cite{ABS}
the ``projected equations'' by fixing some element of the kernel and projecting on a complement of the cokernel. The implicit function theorem then applies, and we obtain solutions of the projected equations, cf.  proposition \ref{prop:implicit}. However, as is shown in theorem \ref{thm:mainimplicit},
it turns out that these solutions are actually solutions of the full system. The essential reason this holds
is Newton's principle of ``actio est reactio'' i.e. the fact that the self-force and self-torque acting on a body through its own gravitational field is zero, which corresponds to (a nonlinear version of) the condition that the force lie in a complement of the above cokernel.

Finally, in section \ref{sec:physint} we demonstrate that if the family $\mathring\rho_\lambda $ is non-spherical for $\lambda>0$, and if the elastic part in the stored energy is non vanishing, the solutions constructed are not spherically symmetric for $\lambda>0$.
Our analysis gives a clear interpretation of the solutions which have been constructed. They are near a particular self-gravitating time-independent fluid body, and have a small deviation from spherical symmetry which is due to the presence of the elastic terms in stored energy function, and hence also in the system of Euler-Lagrange equations under consideration.

\section{Preliminaries} \label{sec:prel}
\newcommand{\rhonull}{\mathring{\rho}}
\newcommand{\epsnull}{\mathring{\eps}}
\newcommand{\specificmass}{m}
We use the setup of \cite{AOS:2011CQGra..28w5006A},
specializing to the
time independent case, with some changes which shall be specified below.
Let $\Re^3_\Bo, \Re^3_\Sp$ denote the body space and space, respectively. We
introduce Cartesian coordinates $X^A$ on $\Re^3_\Bo$ and $x^i$ on
$\Re^3_\Sp$. We will often write the coordinate derivative operators in
$\Re^3_\Bo$ and $\Re^3_\Sp$ as $\partial_A$ and $\partial_i$, respectively,
and also use the notations $u_{,A} = \partial_A u$, and $u_{,i} = \partial_i
u$.
The body $\Bo$ is an open domain in $\Re^3_\Bo$ with closure $\overline{\Bo}$
and boundary $\partial\Bo$. The physical state of
the material is described by configuration $f: \Re^3_\Sp \to \Bo$ and
deformation $\phi: \Bo \to \Re^3_\Sp$ maps. We assume $f \circ \phi =
\id_\Bo$. The gradients of $f, \phi$ are
denoted
$$
f^A{}_i = \partial_i f^A, \quad \phi^i{}_A = \partial_A \phi^i ,
$$
and satisfy
$$
\phi^i{}_A f^A{}_k = \delta^i{}_k .
$$
The body and space are endowed with Riemannian metrics
$\bodymetric_{AB}$ and $\spacemetric_{ij}$.
\begin{remark}
In the Newtonian case we are considering here, we take the space metric to be
the flat, Euclidean metric $\spacemetric_{ij} = \delta_{ij}$. Further, as will
be explained below, we shall take the metric $\bodymetric_{AB}$ on $\Bo$ to
be conformal to the Euclidean metric.
However, in order to have a clear view of the foundations, it is
convenient to allow these for the moment to be general metrics.
\end{remark}
\newcommand{\bodyvol}{V_\bodymetric}
\newcommand{\spacevol}{V_\spacemetric}
We let $\bodyvol$ be the volume element of $\bodymetric$,
$$
(\bodyvol)_{ABC} dX^A dX^B dX^C = \sqrt{\bodymetric}\, \varepsilon_{ABC} dX^A
dX^B dX^C .
$$
Here $\sqrt{\bodymetric} = \sqrt{\det \bodymetric_{AB}}$ and
$\varepsilon_{ABC}$ is the Levi-Civita symbol.
Similarly let $\spacevol$ denote the volume element of $\spacemetric$.
The number density $n$ is defined
by
\begin{equation}\label{eq:ndef}
f^A{}_{i} f^B{}_{j} f^C{}_{k} (\bodyvol)_{ABC}(f(x)) = n(x) (\spacevol)_{ijk} .
\end{equation}
Let
$$
H^{AB} = f^A{}_i f^B{}_j \spacemetric^{ij} .
$$
Making use of the body metric $\bodymetric_{AB}$ we define
$$
H^A{}_B = H^{AC} \bodymetric_{CB} .
$$
Note that unless explicitly specified, we do not raise and lower indices
with $\bodymetric_{AB}$.
\begin{remark} \label{rem:coordinv}
The eigenvalues of $H^A{}_B$ transform as scalars
with respect to coordinate changes both in $\Re^3_{\Bo}$ and in
$\Re^3_{\Sp}$.
For transformations
of $\Re^3_{\Bo}$ this follows from the fact that $H^A{}_B$ undergoes a
similarity transformation under pullback,
$$
(\zeta^* H)^A{}_B = (\zeta^{-1})^A{}_{,C} H^C{}_D \zeta^D{}_{,B} ,
$$
where $\zeta: \Re^3_{\Bo} \to \Re^3_{\Bo}$. On the other hand, for a
map $k: \Re^3_\Sp \to \Re^3_\Sp$ it holds that the components of $H^A{}_B$
transform as scalars, i.e.,
$$
H^A{}_B[f \circ k, k^* \spacemetric](x) = (H^A{}_B)(k(x))
$$
\end{remark}
In local coordinates, we
have
$$
\sqrt{\spacemetric} \, n = \sqrt{\bodymetric} \,\det(f^A{}_i) ,
$$
which in view of remark \ref{rem:coordinv} gives
$$
n = (\det H^A{}_B)^{1/2} .
$$
Let
\newcommand{\calH}{\mathcal H}
\begin{equation}\label{eq:calH=nH}
\calH^A{}_B = (\det H^A{}_B)^{-1/3} H^A{}_B = n^{-2/3} H^A{}_B .
\end{equation}
Then $\calH^A{}_B$ depends  on the body metric $\bodymetric_{AB}$ only via its
conformal class which
can be represented by
\newcommand{\confclassbody}{[\bodymetric]}
$$
\confclassbody_{AB} = n^{-2/3} \bodymetric_{AB} .
$$
Further, define
\newcommand{\hatcalH}{\sigma}
\begin{equation}\label{eq:hatcalH=calH}
\hatcalH^A{}_B = \calH^A{}_B - \frac{1}{3} \tr \calH \,\delta^A{}_B .
\end{equation}
Given the conformal class of the body metric, we
can in addition to the invariant $n$ also define the invariants
\begin{equation}\label{eq:one}
\tau =  (\calH^A{}_A - 3) ,
\end{equation}
and
\begin{equation}\label{eq:two}
\delta = \det(\hatcalH^A{}_B) .
\end{equation}
See \cite{shadi:elast} for related invariants.
It follows from the definition that the invariants $\tau, \delta$ depend only on the conformal class of the
body metric $b$, and further, by remark \ref{rem:coordinv} we have that
$n,\tau, \delta$ transform as scalars with respect to coordinate changes both
in $\Bo$ and $\Re^3_{\Sp}$.

\subsection{Kinematic identities} \label{sec:identities}
Given the above definitions, we have
\begin{equation}\label{eq:dndf}
\frac{\partial n}{\partial f^A{}_i} = n \phi^i{}_A, \quad
\frac{\partial n}{\partial \phi^i{}_A} = - n f^A{}_i
.
\end{equation}
Let $J = n^{-1} \circ \phi$. The Piola transform of a vector $y^i$ is $Y^A =
J f^A{}_i y^i \circ \phi$. The Piola identity states
\begin{subequations}\label{eq:piola-div}
\begin{align}
\nabla_A Y^A &= J (\nabla_i y^i)\circ \phi,  \\
\intertext{and similarly,}
\nabla_i y^i &= n (\nabla_A y^A) \circ f .
\end{align}
\end{subequations}
Here $\nabla_A$
denotes the covariant derivative with respect to $\bodymetric_{AB}$. The
covariant derivatives and Piola transform also applies to two-point tensors,
see \cite{hughes:marsden} for details.
We also have the following versions of the
Piola identities
\begin{subequations}\label{eq:piola-fphi}
\begin{align}
\nabla_i (n \phi^i{}_A) &= \frac{1}{\sqrt{g}} \partial_i (\sqrt{g} \,n
\phi^i{}_A) = 0, \label{eq:piola-phi} \\
\nabla_A (n^{-1} f^A{}_i) &= \frac{1}{\sqrt{b}}
\partial_A (\sqrt{b} \, n^{-1} f^A{}_i) = 0  . \label{eq:piola-f}
\end{align}
\end{subequations}
Let $H_{AB}$ be defined by the relation
\begin{equation}\label{eq:H_ABdef}
H^{AC} H_{CB} = \delta^A{}_B .
\end{equation}
A calculation shows
\begin{subequations}\label{eq:dndHeqs}
\begin{align}
\frac{\partial n}{\partial H^{AB}} &= \frac{n}{2} H_{AB}  , \label{eq:dndH} \\
\frac{\partial^2 n}{\partial H^{AB}\partial H^{CD}}
&= \frac{n}{4}\,(H_{AB}H_{CD} -
2 H_{A(C}H_{D)B}) . \label{eq:dndH2}
\end{align}
\end{subequations}
Further, we have
\begin{equation}\label{eq:dtaudH}
\frac{\partial \tau}{\partial H^{AB}} = H_{F(A} \hatcalH^F{}_{B)} = H_{FA} \hatcalH^F{}_{B} ,
\end{equation}
and
\begin{equation}\label{eq:dhatcalHdH}
\frac{\partial \hatcalH^A{}_B}{\partial H^{CD}} =
- \frac{1}{3} H_{CD} \hatcalH^A{}_B + H_{R(C} ( \delta^A{}_{D)}
\calH^R{}_B - \frac{1}{3}\calH^R{}_{D)} \delta^A{}_B ) .
\end{equation}

\begin{example}
Suppose $\phi$ is a conformal map, so that $(\phi^* \spacemetric)_{AB} =
\lambda(X) \bodymetric_{AB}$.
Then $H^A{}_B = \lambda^{-1} \delta^A{}_B$ and
hence $n = \lambda^{-3/2}$. This gives
\begin{align}
\calH^A{}_B &= \delta^A{}_B , \\
\confclassbody_{AB} &= (\phi^* \spacemetric)_{AB} .
\end{align}
Further, in this case, the invariants $\tau, \delta$ vanish\footnote{Conversely, when $\tau = \delta = 0$, the map $\phi$ is conformal.}
$\tau = \delta = 0$, and by \eqref{eq:dtaudH} we have
\begin{subequations}\label{eq:dtau-conf}
\begin{align}
\frac{\partial \tau}{\partial H^{AB}}  \bigg{|}_{\phi \text{ conformal}}
&= 0 , \\
\frac{\partial^2 \tau}{\partial H^{AB} \partial H^{CD}}  \bigg{|}_{\phi
  \text{ conformal}} &= H_{A(C} H_{D)B} - \frac{1}{3} H_{AB} H_{CD} .
\end{align}
\end{subequations}
From the definition of the invariant $\delta$ we have $\delta =
O(|\hatcalH|^3)$ and hence
 in case $\phi$ is
conformal,
\begin{equation}\label{eq:ddeltadH}
\frac{\partial \delta}{\partial H^{AB}} \bigg{|}_{\phi \text{ conformal}} = 0
, \quad \frac{\partial \delta}{\partial H^{AB} \partial H^{CD}}
\bigg{|}_{\phi \text{ conformal}} = 0 .
\end{equation}
\end{example}

\section{Field equations of a Newtonian elastic body} \label{sec:newt}
Let $\chi_{f^{-1}(\Bo)}$ denote the indicator function of the support
$f^{-1}(\Bo)$ of
the physical body,  and let
$$
|\nabla U|_\spacemetric^2 = \partial_i U \partial_j U \spacemetric^{ij} .
$$
The field equations for the static
Newtonian self-gravitating elastic body are the
Euler-Lagrange equations for an action of the form
\begin{equation}\label{eq:action-euler}
\LL = \int_{\Re^3_\Sp} \Lambda \spacevol ,
\end{equation}
where
$$
\Lambda = ( \frac{|\nabla U|^2}{8\pi G} + \rho  U \chi_{f^{-1}(\Bo)} + n
\eps \chi_{f^{-1}(\Bo)} ) .
$$
Here $\rho = mn$ is the physical density of the material, where $m$ is the
specific mass per particle, and
\begin{equation}\label{eq:epsframeindiff}
\eps = \eps(H^{AB},f)
\end{equation}
is the stored energy function, which describes
the internal energy per particle of the material.
On the other hand, density of internal energy of the material in its physical state
is $n\eps$.
For a stored energy of the form
\eqref{eq:epsframeindiff}, termed frame indifferent, the action is covariant under spatial diffeomorphisms.
The converse also holds, see \cite{beig:schmidt:relativistic:2003CQGra..20..889B}.

Let $\zeta$ denote the fields  $(f^A, U)$.
The canonical stress-energy tensor is
$$
T^i{}_j = \frac{\partial \Lambda}{\partial (\partial_i \zeta)} \partial_j
  \zeta - \Lambda \delta^i{}_j .
$$
Making use of the covariance of the action,
one finds,
cf. \cite{KM}, that the Euler-Lagrange equations
take the form
\begin{equation}\label{eq:conservation}
\nabla_i T^i{}_j = 0 ,
\end{equation}
where $\nabla_i$ denotes the covariant derivative with respect to the metric
$\spacemetric_{ij}$.

Let $\tau^i{}_j$ and $\Theta^i{}_j$ denote the contributions in $T^i{}_j$
from the elastic field $f^A$ and the Newtonian potential $U$, respectively,
so that
$$
T^i{}_j = \tau^i{}_j \chi_{f^{-1}(\Bo)} + \Theta^i{}_j ,
$$
where
$$
\tau^i{}_j = n \frac{\partial \eps}{\partial f^A{}_i} f^A{}_j ,
$$
and
$$
\Theta^i{}_j = \frac{1}{4\pi G} (\nabla^i U \nabla_j U - \half \nabla^k U
\nabla_k U \delta^i{}_j) .
$$
\subsection{Equations in Eulerian form} \label{sec:eulerian}
From the above, we see that
the Eulerian form of the field equations for a self-gravitating
body are
\newcommand{\norm}{n}
\begin{subequations}\label{eq:elast-euler}
\begin{align}
\nabla_i \tau^i{}_j + \rho \nabla_i U &= 0, \quad \text{ in
  $f^{-1}(\Bo)$}, \label{eq:elast-force-euler} \\
\tau^i{}_j \norm_i \big{|}_{\partial f^{-1}(\Bo)} & = 0 ,
 \label{eq:boundarycond-euler} \\
\Delta_\spacemetric U &= 4\pi G \rho
\chi_{f^{-1}(\Bo)} . \label{eq:poisson-euler}
\end{align}
\end{subequations}
Here $n_i$ is the outward pointing normal to the boundary $\partial
f^{-1}(\Bo)$ of the physical body. The boundary condition
\eqref{eq:boundarycond-euler} follows from the conservation equation
\eqref{eq:conservation}, see \cite[Lemma 2.2]{ABS}.
The Newtonian potential $U$ is taken to be the
unique solution to the Poisson
equation \eqref{eq:poisson-euler} such that $U(x) \to 0$, as $x \to \infty$.

\subsection{Integral form of the Newtonian potential}
With $\spacemetric_{ij} = \delta_{ij}$, the solution to
\eqref{eq:poisson-euler} takes the form
$$
U = E\star(4 \pi G \rho \chi_{f^{-1}(\Bo)}) ,
$$
where $E(x) = - 1/(4\pi|x|)$ is the fundamental solution of $\Delta$, i.e.,
$$
U(x) = -G\int_{f^{-1}(B)} \frac{\rho(x')}{|x-x'|} dx' ,
$$
where $dx$ is the Euclidean volume element.
\begin{remark}\label{rem:eulerianaction}
Substituting the solution of the Poisson equation
  \eqref{eq:poisson-euler} into the action \eqref{eq:action-euler}, one finds
after a partial integration
\begin{equation}\label{eq:LL-BT}
\LL = \int_{f^{-1}(\Bo)} (\half \rho U + \rho \eps ) dx\,,
\end{equation}
where $\epsilon$ is given by \eqref{eq:epsform}. The form of the action given in \eqref{eq:LL-BT}, which is expressed in terms of the configuration $f^A$, can be compared with the energy expression discussed in  \cite[Appendix 5.B]{B&T:1987gady.book.....B}.
\end{remark}
Using the differentiation formula for convolutions, we have
$$
\partial_i U = 4\pi G (\partial_i E) \star (\rho \chi_{f^{-1}(\Bo)}) = G E \star
\partial_i (\rho \chi_{f^{-1}(\Bo)}) ,
$$
where in the last equality, the derivative is in the sense of
distributions.
Hence, the force is given by the integral expression
\begin{align}
-\partial_i U(x) &=  G \int_{f^{-1}(B)}
\left ( \partial_i \frac{1}{|z|} \right )
\bigg{|}_{z=x-x'} \rho(x') d x' .
\label{eq:force-int}
\end{align}
\subsection{Equations in material frame} \label{sec:materialframe}
We now write the elastic system in the material frame which is the form which
we shall use.
The change of variables formula applied to \eqref{eq:action-euler} gives
\newcommand{\material}{\text{mat}}
\begin{equation}\label{eq:action-material}
\LL = \int_{\Bo} (\half \mr{\rho} \, U \circ \phi + \mr{\rho} \eps ) dX ,
\end{equation}
where $dX$ is the Euclidean volume element on $\Bo$, and where $\eps$ is taken to be of the form
\begin{equation}\label{eq:eps-material}
\eps = \eps(H^{AB},X) ,
\end{equation}
with $H^{AB}$ the inverse of $\phi^i{}_A \phi^j{}_B \spacemetric_{ij} \circ \phi$.
The Newtonian potential in the matieral frame is of the form
$$
(U \circ \phi)(X) = - G \int_{\Bo} \frac{\rhonull(X') }{|\phi(X) - \phi(X')|} d X' .
$$

Let
\newcommand{\taudens}{\tilde\tau}
\begin{equation}\label{piolacauchy}
\taudens^A{}_i =
- \sqrt{\bodymetric} J (f^A{}_k \tau^k{}_i) \circ \phi .
\end{equation}
Then $\taudens$ is minus the Piola transform of
$\tau^i{}_j$, densitized by the factor $\sqrt{\bodymetric}$ ($= \rhonull$ in
Cartesian coordinates).
We have
\begin{equation}\label{eq:taudens-def}
\taudens^A{}_i = \frac{\partial(\sqrt{\bodymetric}\, \eps)}{\partial \phi^i{}_A}\,,
\end{equation}
where the $\eps$ in \eqref{eq:taudens-def} is understood to be as in \eqref{eq:eps-material}.
The Piola identity implies
$$
\nabla_A \taudens^A{}_i = \sqrt{\bodymetric}\,J (\nabla_i \tau^i{}_j)\circ \phi .
$$
With $\spacemetric_{ij} = \delta_{ij}$ and
$\bodymetric_{AB} = \rhonull^{2/3} \delta_{AB}$, we have $J =
\rhonull^{-1}\det(\phi^i{}_A)$,
$\sqrt{\bodymetric} = \rhonull$. In view of the definition of $\taudens^A{}_i$
we have
$$
\nabla_A \taudens^A{}_i =  \partial_A \taudens^A{}_i ,
$$
and we can write the system
\eqref{eq:elast-euler} in the material frame as
\begin{subequations}\label{eq:elastmaterialsyst}
\begin{align}
- \partial_A (\taudens^A{}_i) +
\rhonull (\partial_i U) \circ \phi &= 0 \quad
   \text{ in } \Bo , \label{eq:elastmaterial}
\\
\taudens^A{}_i n_A|_{\partial\Bo} &=0 ,
\label{eq:elastmaterial-boundary}
\end{align}
\end{subequations}
where $n^A$ is the normal to $\partial \Bo$.  Using the change of variables
formula, and \eqref{eq:force-int}
the gravitational term $\rhonull
(\partial_i U) \circ \phi$ in \eqref{eq:elastmaterial} can be written in the integral form
\begin{equation}\label{eq:force-mat-int}
(\rhonull (\partial_i U) \circ \phi)(X) =
- G \rhonull(X) \int_{\Bo} \left ( \partial_i \frac{1}{|z|} \right ) \bigg{|}_{z
  = \phi(X) - \phi(X')} \rhonull(X') d X' .
\end{equation}

\subsection{The reference configuration} \label{sec:refconfig}
We now introduce the situation which we shall consider in the rest of this
paper. We restrict the space metric to be the flat metric $\spacemetric_{ij}
= \delta_{ij}$ and work in Cartesian coordinates $(X^A)$, $(x^i)$ on the body
and space respectively. The radial coordinates in $\Re^3_\Sp$ and $\Re^3_\Bo$
are
$R = |X| = \sqrt{X^A X^B \delta_{AB} }$ and $r = |x| = \sqrt{x^i x^j
  \delta_{ij}}$.
In the following,
we will set the specific mass $\specificmass =1$ and
denote both the number density and the density of the material by $\rho$.

\begin{definition}\label{def:refconfig}
The reference configuration is given by
choosing the
body domain to be $B(R_0)$, the
ball of radius $R_0$ i.e.
$$
\Bo = \{ |X| < R_0 \} ,
$$
with the trivial configuration and deformation $f = \id$, $\phi=\id$, i.e.
$$
f^A(x) = \delta^A{}_i x^A, \quad A=1,2,3, \qquad \phi^i(X) = \delta^i{}_A X^A, \quad i=1,2,3 .
$$
For a given, positive function $\rhonull$ on $\Bo$ called
the reference density, we set
$$
\bodymetric_{AB} = \rhonull^{2/3} \delta_{AB} .
$$
\end{definition}
In
Cartesian coordinates $(X^A)$, $(x^i)$ we have
\begin{equation}\label{eq:important}
\rho = \rhonull
\det(f^A{}_i)= \mr{\rho} \det(\phi^i{}_A)^{-1}
,
\end{equation}
and the volume element of $\bodymetric_{AB}$ takes the form
$$
\bodyvol = \rhonull(X) dX .
$$
For the reference configuration, we have
$$
\rho(x) = \rhonull(f(x)) ,
$$
and the invariants $\tau, \delta$ vanish,
$$
\tau = \delta = 0 .
$$
We will assume an isotropic
stored energy function, i.e. one depending on configurations only via the invariants
$(\rho,\tau,\delta)$, where $\rho$ is given by \eqref{eq:important} and
and $\tau$, $\delta$ are given by \eqref{eq:one} and \eqref{eq:two}, respectively.
We restrict for simplicity to stored energy functions which have no explicit dependence on the configuration $f$.
Note that $\tau$ and $\delta$ are independent
of $\mr{\rho}$.
By Taylor's formula,
$\eps = \eps(\rho,\tau,\delta)$ can,  for configurations near the reference configuration, be written in the form
\newcommand{\efluid}{e}
\begin{equation}\label{eq:epsform}
\eps(\rho,\tau,\delta) = \efluid(\rho) + \tau l(\rho,\tau,\delta) + \delta m(\rho,\tau,\delta) .
\end{equation}
for some functions $l,m$.

Here we may view
$\efluid(\rho)$ as the stored energy for a fluid configuration. Due
to the fact that the invariants $\tau, \delta$ vanish to first order at the
reference configuration, the Euler-Lagrange equation in the reference state
with $f = \id$, $\phi = \id$ will involve only $\efluid(\rhonull)$.

\subsection{Static self-gravitating fluid bodies} \label{sec:statfluid}
We now consider a fluid with stored energy $\efluid = \efluid(\rho)$ at the
reference configuration $f = \id$.
A calculation shows
$$
\tau^i{}_j = \rho^2  \efluid'(\rho) \delta^i{}_j ,
$$
where $\efluid' = d\efluid/d\rho$.
The pressure of a fluid is given in terms of the energy density $\mu =
\rho\eps$ by
$$
\pfluid = \rho \frac{d\mu}{d\rho} - \mu ,
$$
or
\begin{equation}\label{eq:pfluid-e}
\pfluid(\rho) = \rho^2 e'(\rho) .
\end{equation}
Hence,
$$
\tau^i{}_j = \pfluid \delta^i{}_j ,
$$
and equation \eqref{eq:elast-euler}
takes the form
\begin{subequations}\label{eq:fluid-euler}
\begin{align}
\nabla_i \pfluid + \rho \nabla_i U &= 0  , \\
\pfluid \big{|}_{\partial f^{-1}(\Bo)} &= 0  ,  \label{eq:fluid-euler-bc} \\
\Delta U &= 4\pi G \rho \chi_{f^{-1}(\Bo)} .
\end{align}
\end{subequations}
Static fluid bodies in Newtonian gravity are radially symmetric (see \cite{lichtenstein}),
$\rho = \rho(r)$.
Let
$$
\Mass(r) = 4\pi \int_0^r \rho(s) s^2 ds
$$
be the mass contained within radius $r$, so that $\Mass = \Mass(R_0)$ is the total
mass of the body. It follows from the radial Poisson equation that
$$
\nabla_i U = G \frac{\Mass(r)}{r^2} \frac{x_i}{r} ,
$$
and the self-gravitating fluid is therefore governed by the ordinary
differential equation
\begin{equation}\label{eq:dpdr}
\frac{1}{\rho} \frac{d\pfluid}{dr} + G \frac{\Mass(r)}{r^2} = 0 .
\end{equation}
Let $\rho_c > \rho_0 > 0$ be given and consider an equation of state
$\pfluid = \pfluid(\rho)$,
where $\pfluid$ is a smooth, non-negative function with $\frac{d p}{d \rho} > 0$ in
$[\rho_0,\rho_c]$, with $\pfluid(\rho_0) = 0$.
Given an equation of state $p(\rho)$ satisfying the above properties and given $\rho_c$, there is a unique static fluid body, centered at the origin, which has boundary density $\rho_0$ for some radius $R_0$, see \cite{rendall:schmidt:1991CQGra...8..985R} for details\footnote{This reference treats the relativistic case. The Newtonian case considered here is similar but simpler.}.

The adiabatic index of the fluid is given by
\begin{equation}\label{eq:gammadef}
\gamma = \frac{\rho}{\pfluid}\frac{d\pfluid}{d\rho} .
\end{equation}
The following stability condition on $\gamma$,
$$
3\gamma - 4 > 0,
\quad \text{for $\rho \in (\rho_0,\rho_c]$} ,
$$
plays an important role in our argument.
As a simple explicit example of an equation of state with the above stated
properties, consider
\begin{equation}\label{eq:goodstate}
\pfluid = D (\rho-\rho_0) ,
\end{equation}
for some (dimensional) constant $D>0$.
Then we have
$$
\pfluid(3\gamma-4) = D (-\rho + 4 \rho_0) .
$$
It follows that for the equation of state \eqref{eq:goodstate}, the stability
condition holds for central density $\rho_c$ satisfying $\rho_0 < \rho_c < 4 \rho_0$.

\section{The reference body and its deformation} \label{sec:refbodydeform}
We shall construct a family of static, self-gravitating elastic bodies by
deforming a reference body.
The reference body is a static fluid body in the reference configuration,
i.e. a solution of the equations \eqref{eq:fluid-euler} with
internal energy $e = e(\rho)$, of
radius $R_0$ and density $\rhonull$. As in section \ref{sec:statfluid} we assume that the boundary density is
positive,
$$
\rhonull_0 = \rhonull(R_0) > 0 ,
$$
and that $\mr{p} = \rhonull^2 e'(\rhonull)$, the pressure in the reference
body, satisfy $\mr{p} > 0$ in $\Bo$, $\mr{p} = 0$ if $R=R_0$ and $\frac{d \mathring{p}}{d \mathring{\rho}} > 0$
 in $\bar{\Bo}$. It follows that $\rhonull \geq \rhonull_0 >
0$. Note that the condition $\mr{\rho}(R_0) = 0$ is simply the boundary condition
\eqref{eq:fluid-euler-bc}, which follows from the variational principle.

Given a static self-gravitating
fluid reference body with internal energy function
$e(\rho)$ we consider deformed static self-gravitating bodies with
isotropic stored energy function of the form \eqref{eq:epsform}, with
$\eps(\rho,0,0) = e(\rho)$. Let
$$
\mr{l} = l(\rhonull,0,0) .
$$
We assume
$$
\mr{l}  > 0 ,
$$
in $\overline{\Bo}$.

\subsection{Analytical formulation}
Let
$$\FF: C^\infty(\Bo) \times C^\infty(\Bo; \Re^3_\Sp)
\to C^\infty(\Bo ; \Re^3) \times
C^\infty(\partial\Bo ; \Re^3)
$$
be defined by
\begin{equation}\label{eq:Fdef}
\FF[\rhonull,\phi] = (- \partial_A (\taudens^A{}_i) +
\rhonull (\partial_i U) \circ \phi , \taudens^A{}_i n_A) .
\end{equation}
Let $H^s(\Bo)$ and $H^s(\partial \Bo)$ denote the $L^2$ Sobolev spaces, see
\cite[Chapitre 1]{lions:magenes:vol1} for background.
For $k \geq 1$, $\FF$ extends to a smooth map
$$
\FF: H^{2+k}(\Bo) \to H^k(\Bo) \times H^{1/2+k}(\partial\Bo) .
$$
For simplicity, we assume $k$ is an integer in the following.
\begin{remark} The same statement holds true if we replace the $L^2$ Sobolev
  space by the spaces $W^{2+k,p}(\Bo)$ and the corresponding boundary space
  $B^{k+1-1/p,p}(\partial\Bo)$, with $k \geq 0$, $p > 3$. With $k=0$, this
  corresponds to the situation considered in \cite{ABS}.
\end{remark}
The system \eqref{eq:elastmaterialsyst} for a self-gravitating body
takes the form
$$
\FF[\rhonull,\phi] = 0 ,
$$
which we consider as an equation for the deformation $\phi$ given a
reference density $\rhonull$. Letting $\lambda \mapsto \rhonull_\lambda$ be a
one-parameter family of reference densities, with $\rhonull_0 = \rhonull$,
we shall apply the implicit
function theorem to show that for $\lambda$ sufficiently close to zero, there
is a solution $\phi_\lambda$ of the equation
$$
\FF[\rhonull_\lambda, \phi_\lambda] = 0 .
$$
In order to do this, we must analyze the Frechet derivative
$$
D_\phi \FF[\rhonull, \id] .
$$
This is done in the next section.

\section{The linearized operator} \label{sec:linop}
Let $\xi^i$ be a vector field on $\Re^3_\Sp$ and consider a 1-parameter family
of deformations, $\phi_s$, with $\phi_0 = \id$, such that
$$
\frac{d}{ds} \phi^i_s(X) \bigg{|}_{s=0} = \xi^i(\id(X)) .
$$
The linearization with respect to $\phi$
of the map $\FF$, at the reference
configuration, is given by
$$
D_\phi \FF[\rhonull,\id].\xi = \frac{d}{ds} \bigg{|}_{s=0} \FF[\rhonull,
  \phi_s] .
$$
We define the operator $\xi \mapsto L_i[\xi]$ and
the linearized boundary operator $\xi \mapsto l_i[\xi]$ by
$$
D_\phi \FF[\rhonull,\id].\xi = (-L_i[\xi], l_i[\xi]) .
$$
We now calculate the the explicit form of $L_i[\xi]$ and $l_i[\xi]$.

\newcommand{\g}{$\bf \bar g g^{\alpha\beta}$}
\subsection{Derivation}
\newcommand{\taunull}{\mathring{\tau}}
Let $\taunull^A{}_i$ be defined by
\newcommand{\pnull}{\mathring{p}}
$$
\taunull^A{}_i = \taudens^A{}_i \bigg{|}_{\phi = \id} .
$$
Then
\begin{equation}\label{unperturbed}
\mathring{\tau}^A{}_i = - \mathring{p}\,\delta^A{}_i \, .
\end{equation}
Recall that with $\phi = \id$, we have $H_{AB} = \delta_{AB}$.
A calculation using
the facts recorded in section \ref{sec:identities} gives
\begin{multline}\label{piola}
\frac{d}{ds} \taudens[\rhonull,\phi_s]^A{}_i
\bigg{|}_{s=0}
= [2 \mathring{p}\,\delta^{[A}{}_j \delta^{B]}{}_i
+ \mathring{\rho}\frac{d \mathring{p}}{d \mathring{\rho}}\,\delta^A{}_i
\delta^B{}_j \\
+
4 \mathring{\rho}\,\mathring{l}\,(\delta^{(A}{}_i \delta^{B)}{}_j -
\frac{5}{6}\,\delta^A{}_i \delta^B{}_j + \frac{1}{2}\, \delta^{AB}\delta_{ij})]\,\xi^j{}_{,B} ,
\end{multline}
where $\mathring{l} = l(\mathring{\rho},0,0)$.
With $\phi = \id$, we have the identification $x^i = \phi^i(X)$. We will
sometimes write $X^i$ for $\phi^i(X)$. The chain rule gives
$$
\partial_i
= \delta^A{}_i
\partial_A \,.
$$
and we will often make use of this notation in the following.
Further, define  $(\delta_\xi \mathring{\sigma})_{ij}$ by
$$
(\delta_\xi \mathring{\sigma})_{ij} = \xi_{(i,j)} - \frac{1}{3}\,\delta_{ij}\,
\xi^l{}_{,l} \, ,
$$
so that $2(\delta_\xi \mathring{\sigma})_{ij}$ is
the Euclidean space conformal Killing operator, acting on $\xi$.

The operator $L_i[\xi]$ can now be written in the form
\begin{subequations}\label{eq:lambda-eqs}
\begin{align}
L_i [\xi] &= \partial_i\left[\left(- \mathring{p} + \mathring{\rho} \,\frac{d \mathring{p}}{d \mathring{\rho}}\right)\xi^j{}_{,j}\right] +
\partial_j(\mathring{p}\, \xi^j{}_{,i}) + 4 \,\partial^j
\left[\mathring{\rho}\, \mathring{l} \,(\delta_\xi
  \mathring{\sigma})_{ij}\right] \nonumber \\
&\quad + \,G \int_{\mathcal{B}}\mathring{\rho}(X)\mathring{\rho}(X')\left(\partial_i \partial_j\frac{1}{|X-X'|}\right)
[\xi^j(X) - \xi^j(X')]\,dX'  . \label{lambda5}
\end{align}
Recalling that $\mr{p}\,(R_0)=0$, we find that the linearized
boundary operator is given by
\begin{equation}
l_i[\xi] = \left[\mathring{\rho}\,\frac{d \mathring{p}}{d \mathring{\rho}}\,\xi^j{}_{,j}\,n_i +
4 \,\mathring{\rho}\, \mathring{l}\, (\delta_\xi \mathring{\sigma}_{ij})
n^j\right]
\bigg{|}_{\partial \mathcal{B}}
\label{lambda5'}
\end{equation}
\end{subequations}
Here we have used the notation
$n^i = \frac{1}{R} X^i$.

The operator $L_i$ is formally self-adjoint. This follows from the fact that
it arises from a variational problem, but can easily be checked
explicitely. Let $(\xi, \eta)$ denote the $L^2$ inner product
$$
(\xi, \eta) = \int_\Bo \xi^i \eta_i dX .
$$
The Green's identity for $(-L[\xi], l[\xi])$ takes the form
\begin{equation}\label{eq:Green}
(L[\xi],\eta) - (\xi, L[\eta]) = \int_{\partial\Bo} l_i[\xi] \eta^i dS(X) -
  \int_{\partial\Bo} \xi^i l_i[\eta] dS(X) \, .
\end{equation}
The following proposition gives an alternate form for $L_i[\xi]$ which will
play an important role in the following.
\begin{prop}\label{prop:alternative}
The operator $L_i$ in (\ref{lambda5}) can be written as
\begin{equation}\label{id7}
L_i[\xi] = \mr{\rho}\,\partial_i\left(\frac{1}{\mr{\rho}}\frac{d \mr{p}}{d\mr{\rho}}\, \delta_\xi \mr{\rho} + \delta_\xi \mr{U}\right) +
4 \,\partial^j \left[\mathring{\rho}\, \mathring{l}\,(\delta_\xi \mr{\sigma})_{ij}\right]\,,
\end{equation}
where $\delta_\xi \mathring{\rho} = (\mathring{\rho}\,\xi^i)_{,i}$ and
\begin{align*}
(\delta_\xi \mr{U})(X) &= -  \red{G} \partial_j \int_\mathcal{B}\frac{\mathring{\rho}(R')\,\xi^j(X')}{|X - X'|} dX' \\
&= - G\int_{\mathcal{B}} \frac{(\delta_\xi \mr{\rho}) (X')}{|X-X'|} dX'
+ G \int_{\partial \mathcal{B}}
\frac{\mr{\rho}(R')}{|X-X'|}
\xi^i(X') n_i(X') dS(X') .
\end{align*}
Further, the boundary operator $l_i$ in (\ref{lambda5'}) can be written as
\begin{equation}\label{rewrite1}
l_i[\xi] = \left[\left(\frac{d \mathring{p}}{d \mathring{\rho}}\,\delta_\xi \mr{\rho} + \mr{\rho} \,\mr{U}' \xi^j n_j\right)n_i +
4 \mr{\rho}\,\mr{l}\, (\delta_\xi \mr{\sigma})_{ij} n^j\right]\,\bigg{|}_{\partial \mathcal{B}} .
\end{equation}
\end{prop}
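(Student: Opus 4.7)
The plan is to handle the three contributions to $L_i[\xi]$ in \eqref{lambda5} separately---the differential fluid part, the gravitational integral, and the elastic shear term (which already matches the target form)---and reassemble them using the background hydrostatic equilibrium $\partial_i\mr{p} = -\mr{\rho}\partial_i\mr{U}$ together with the barotropic relation $\partial_i\mr{p} = \mr{p}'(\mr{\rho})\partial_i\mr{\rho}$.

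First I would simplify the differential fluid part $G[\xi] := \partial_i\!\bigl[(-\mr{p}+\mr{\rho}\mr{p}')\partial_j\xi^j\bigr] + \partial_j(\mr{p}\,\partial_i\xi^j)$. The identity $\partial_i(-\mr{p}+\mr{\rho}\mr{p}') = \mr{\rho}\partial_i\mr{p}'$ removes the $\partial_i\mr{\rho}$ term, and the two $\mr{p}\,\partial_i\partial_j\xi^j$ contributions cancel between the summands, leaving
$$
G[\xi] = \mr{\rho}(\partial_i\mr{p}')\partial_j\xi^j + \mr{\rho}\mr{p}'\partial_i\partial_j\xi^j + (\partial_j\mr{p})\,\partial_i\xi^j .
$$

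Next I would rewrite the gravitational integral by distributing $[\xi^j(X)-\xi^j(X')]$. The $\xi^j(X')$ piece is handled via $\partial_j^{(X)}|X-X'|^{-1} = -\partial_j^{(X')}|X-X'|^{-1}$ and an integration by parts in $X'$; comparing with the first integral expression for $\delta_\xi\mr{U}$ in the proposition, this piece yields $+\mr{\rho}(X)\,\partial_i\delta_\xi\mr{U}(X)$, and the boundary term produced by the same integration by parts is exactly the layer integral in the second (alternative) expression for $\delta_\xi\mr{U}$, establishing that equivalence as a by-product. The $\xi^j(X)$ piece reads $G\mr{\rho}(X)\xi^j(X)\int_{\Bo}\mr{\rho}(X')\partial_i\partial_j|X-X'|^{-1}\,dX'$, which via $\mr{U}(X) = -G\int_{\Bo}\mr{\rho}(X')|X-X'|^{-1}\,dX'$ equals $-\mr{\rho}(X)\xi^j(X)\partial_i\partial_j\mr{U}(X)$. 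Adding $G[\xi]$ to $\mr{\rho}\,\partial_i\delta_\xi\mr{U} - \mr{\rho}\xi^j\partial_i\partial_j\mr{U}$ and substituting $\partial_j\mr{p} = -\mr{\rho}\partial_j\mr{U}$, the non-$\delta_\xi\mr{U}$ terms collapse into $\mr{\rho}\,\partial_i\!\bigl(\mr{p}'\partial_j\xi^j - \xi^j\partial_j\mr{U}\bigr)$. Since $\xi^j\partial_j\mr{\rho}/\mr{\rho} = -\xi^j\partial_j\mr{U}/\mr{p}'$ by equilibrium, the bracket equals $(\mr{p}'/\mr{\rho})\delta_\xi\mr{\rho}$ with $\delta_\xi\mr{\rho} = (\mr{\rho}\xi^j)_{,j}$, which delivers \eqref{id7}.

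For the boundary identity \eqref{rewrite1}, I would use $\partial_j\mr{\rho}|_{\partial\Bo} = \tfrac{d\mr{\rho}}{dR}\,n_j$ together with the radial background equation $\mr{p}'\tfrac{d\mr{\rho}}{dR} = -\mr{\rho}\mr{U}'$; these give $\mr{p}'\delta_\xi\mr{\rho}|_{\partial\Bo} = \mr{\rho}\mr{p}'\partial_j\xi^j - \mr{\rho}\mr{U}'\xi^jn_j$, so $(\mr{p}'\delta_\xi\mr{\rho} + \mr{\rho}\mr{U}'\xi^jn_j)n_i = \mr{\rho}\mr{p}'\partial_j\xi^j\,n_i$, reproducing the first summand of \eqref{lambda5'}. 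The main obstacle throughout is the justification of the split of the gravitational integral: each piece individually involves the non-integrable kernel $\partial_i\partial_j|X-X'|^{-1}\sim|X-X'|^{-3}$, whereas the original integral is absolutely convergent only because of the subtraction $\xi^j(X)-\xi^j(X')$. Since $\mr{\rho}$ is smooth on $\overline{\Bo}$, the two halves can be interpreted distributionally (or as Cauchy principal values) and the identifications with $\partial_i\partial_j\mr{U}$ and $\partial_i\delta_\xi\mr{U}$ are then rigorous; a formal argument would invoke the Newtonian potential estimates of appendix \ref{sec:newtest}.
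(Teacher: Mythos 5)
Your proof is correct and follows essentially the same route as the paper: both split the gravitational integral in \eqref{lambda5} into the $\xi^j(X')$ piece (identified with $\mr{\rho}\,\partial_i\delta_\xi\mr{U}$, with the layer term arising from the integration by parts in $X'$) and the $\xi^j(X)$ piece (identified with $-\mr{\rho}\,\xi^j\partial_i\partial_j\mr{U}$), and then recombine with the differential fluid terms using hydrostatic equilibrium $\partial_j\mr{p}=-\mr{\rho}\,\partial_j\mr{U}$ and the barotropic chain rule. Your closing remark on the distributional/principal-value interpretation of the two separated singular integrals is a point the paper passes over silently, but it does not change the argument.
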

\begin{proof}
Let us call the first, second and fourth term in (\ref{lambda5}) respectively $A$, $B$ and (second line) $C$.
The elastic term in the first line is not affected. We then have that
\begin{align}
C &= - G \mr{\rho}\red{(X)} \partial_i \partial_j \int_{\mathcal{B}}
\frac{\mathring{\rho}(X') \xi^j(X')}{|X - X'|} dX'
- \mathring{\rho}(X) \xi^j(X) \partial_i \partial_j \mathring{U}(X) ,
\label{C}
\intertext{so that}
A + B + C &= - G \rhonull(X) \partial_i \partial_j \int_{\mathcal{B}} \frac{\mathring{\rho}(X') \xi^j(X')}{|X - X'|} dX' +
[-(\partial_i \mathring{p})\xi^j{}_{,j} + (\partial_j \mathring{p})
  \xi^j{}_{,i}] \nonumber \\
&\quad +\partial_i\left(\mathring{\rho} \frac{d \mathring{p}}{d \mathring{\rho}} \xi^j{}_{,j}\right) \red{+} \mathring{\rho}(X) \xi^j(X)
\partial_i \partial_j \mathring{U}(X) \nonumber \\
&= - G \,\partial_i \partial_j \int_{\mathcal{B}}
\frac{\mathring{\rho}(X') \xi^j(X')}{|X - X'|} dX' + \mathring{\rho}
\,\partial_i\left[\frac{d \mathring{p}}{d
    \mathring{\rho}}\frac{(\mathring{\rho}
    \xi^j)_{,j}}{\mathring{\rho}}\right]  . \label{last}
\end{align}
To check the last equality in (\ref{last}), we calculate
\begin{equation}\label{non}
\mathring{\rho} \,\partial_i\left[\frac{d \mathring{p}}{d \mathring{\rho}}\frac{(\mathring{\rho} \,\xi^j)_{,j}}{\mathring{\rho}}\right] =
\mathring{\rho} \,\partial_i\left[\frac{d \mathring{p}}{d \mathring{\rho}}\frac{\mathring{\rho}_j \xi^j}{\mathring{\rho}}\right] +
\partial_i \left[\mathring{\rho}\frac{d \mathring{p}}{d \mathring{\rho}} \xi^j{}_{,j}\right] - \frac{d \mathring{p}}{d\mathring{\rho}}
\xi^j{}_{,j} \partial_i \mathring{\rho} .
\end{equation}
But the first term in (\ref{non}) can be rewritten as
\begin{equation}\label{non1}
\mathring{\rho}\, \partial_i\left[(\partial_j \mathring{p}) \frac{\xi^j}{\mathring{\rho}}\right] = (\partial_j \mathring{p}) \xi^j{}_{,i} + \mathring{\rho}\, \partial_i\left[- \frac{\mathring{\rho}\,\partial_j \mathring{U}}{\mathring{\rho}}\right] \xi^j\,,
\end{equation}
where we have used the $\mathring{p}' = \mathring{\rho} \mathring{U}'$. in the last term. Inserting (\ref{non1}) into (\ref{non}) proves the second equality
in (\ref{last}), which in turn proves (\ref{id7}). The proof of
(\ref{rewrite1}) is straightforward.
\end{proof}
The Euclidean invariance of the action \eqref{eq:action-euler} governing a
self-gravitating body implies that the linearized operator and
the linearized boundary operator annihilate Euclidean Killing vectors.
For translations this follows from \eqref{eq:lambda-eqs},
and for rotations from \eqref{id7}, \eqref{rewrite1}.
We shall show that under suitable conditions on the fluid
equation of state,
these are the only elements in the kernel of the linearized operator
$\xi \mapsto (-L[\xi],l[\xi])$.
To achieve this, it is vital to decompose the perturbations $\xi^i$ into
radial and non-radial parts.

A vector field $\xi^i$ on $\mathcal{B}$ is called radial if it is of the form
$\xi^i(X) = F(R) n^i(X)$, and
non-radial if $\int_{S^2(R)} \xi^i n_i \,d S = 0$ for all $R \in (0,R_0)$.
If we consider conformal Killing vectors on $\Bo$,
we have that
translations, rotations and conformal boosts are non-radial, while dilations are radial.
The radial part of a vector field $\xi$ is given by
\begin{equation}
\xi^i_{r}(X) = \frac{1}{4 \pi R^2} \int_{\partial B(R)}\!\!\!\xi^j(X')n_j(X') dS(X')\, n^i(X)\,.
\end{equation}
where $B(R)$ is the Euclidean ball of radius $R$.
This gives a unique decomposition
$$
\xi = \xi_r + \xi_{nr}
$$
of $\xi$ into a radial and a non-radial part.
\begin{lemma}
Radial and non-radial vector fields are $L^2$-orthogonal in terms of
$( \cdot , \cdot)$. Let $\xi$ be a vector field with radial and non-radial
parts $\xi_r$ and $\xi_{nr}$, respectively. Then $L[\xi_r]$ is radial and
$L[\xi_{nr}]$ is non-radial.  If $l_i[\xi]$ is zero,
then $l_i[\xi_r] = 0$ and $l_i[\xi_{nr}] = 0$.
\end{lemma}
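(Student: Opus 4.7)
The plan is to exploit the rotational invariance of the background. First, orthogonality of $\xi_r$ and $\xi_{nr}$ in $(\cdot,\cdot)$ follows by passing to spherical coordinates: the integral reduces to $\int_0^{R_0} F(R) \bigl(\int_{S^2(R)} n_i \eta^i_{nr}\, dS\bigr)\, dR$, which vanishes by the very definition of non-radial.

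The key structural observation is that both $L$ and $l$ commute with the natural $SO(3)$-action on vector fields, $(R\cdot\xi)^i(X) = R^i{}_j \xi^j(R^{-1}X)$. This is transparent from \eqref{id7} and \eqref{rewrite1}: the scalar coefficients $\mr{\rho}, \mr{p}, d\mr{p}/d\mr{\rho}, \mr{l}$ depend only on $R$, the conformal Killing operator $\delta_\xi \mr{\sigma}$ is built from the Euclidean metric, and both the Newtonian kernel $|X-X'|^{-1}$ and the integration measures are rotation invariant. I would then identify the radial projection $\Pi_r \colon \xi \mapsto \xi_r$ with the Haar average
$$\Pi_r \xi = \int_{SO(3)} R\cdot \xi\, dR,$$
which is a short computation: change variables $Y = R^{-1}X$ to push normalized Haar measure to the normalized uniform measure on $S^2(|X|)$, and use $R^T n(X) = n(R^{-1}X)$ to recover precisely the formula in the lemma. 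Since $L$ commutes with each $R\cdot$, it commutes with $\Pi_r$ and with $\Pi_{nr} = \mathrm{Id} - \Pi_r$. Hence $L[\xi_r] = \Pi_r L[\xi]$ is radial and $L[\xi_{nr}] = \Pi_{nr} L[\xi]$ is non-radial.

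For the boundary operator, the same reasoning shows that $l[\xi_r]$ is radial on $\partial\mathcal{B}$ and $l[\xi_{nr}]$ is non-radial. Suppose now $l_i[\xi] = 0$. Then $l_i[\xi_r] = -l_i[\xi_{nr}]$ is simultaneously radial and non-radial on $\partial\mathcal{B}$. Since $R = R_0$ is constant on $\partial\mathcal{B}$, a radial vector field there has the form $c\, n^i$ with $c$ a constant, and the non-radial condition $\int_{\partial\mathcal{B}} c\, dS = 4\pi R_0^2 c = 0$ forces $c = 0$. Therefore $l_i[\xi_r] = l_i[\xi_{nr}] = 0$. I expect the only point requiring real care to be the identification of $\Pi_r$ with the Haar average (and the parallel assertion for $l$); the remainder is bookkeeping with the rotation invariance of the background.
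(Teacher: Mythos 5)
Your proposal is correct, and for the key middle claim it takes a genuinely different route from the paper. The paper handles $L[\xi_r]$ exactly as you do (radial fields are rotation\--invariant and rotations commute with $L$), but for $L[\xi_{nr}]$ the authors explicitly state they could not find such a conceptual argument and instead verify non\--radiality term by term in \eqref{id7}, e.g.\ checking that $\delta_\xi\mr{\rho}$ has zero spherical mean via Stokes' theorem, and similarly that $\int_{\partial\Bo}l_i[\xi_{nr}]n^i\,dS=0$. Your identification of the radial projection with the Haar average $\Pi_r\xi=\int_{SO(3)}R\cdot\xi\,dR$ supplies precisely the missing conceptual argument: Haar invariance shows $\Pi_r\xi$ is rotation\--invariant hence of the form $F(R)n^i$, and the identity $R^Tn(X)=n(R^{-1}X)$ together with the pushforward of Haar measure to the uniform measure on $S^2(|X|)$ recovers the paper's formula for $\xi_r$, after which equivariance gives $L\Pi_r=\Pi_rL$ and $l\Pi_r=\Pi_r^{\partial}l$ at once (you should state explicitly that the same Haar\--average projection is being used on the target spaces of covector fields on $\Bo$ and on $\partial\Bo$, but that is routine). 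Your treatment of the last claim --- $l[\xi_r]=c\,n_i$ with $c$ constant by symmetry, killed by pairing against $n^i$ --- coincides with the paper's. What each approach buys: yours is more robust and shorter, applying verbatim to any rotation\--equivariant operator without inspecting its terms; the paper's direct check is more elementary and produces along the way the explicit facts (such as \eqref{see}) that are quoted again later.
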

\begin{proof}
The first claim is obvious. For the second claim concerning radial fields one
simply observes radial fields are - as vector fields - invariant under
rotations and that rotations commute with $L_i$, viewed as an operator from
vector fields to covector fields. For non-radial fields we could not find such
a simple conceptual argument, but going through the terms in (\ref{id7}) this
is not difficult to check, e.g. $\delta_\xi \mr{\rho}$ has zero spherical
mean by $\int_{\partial B(R)}
\xi^i n_i dS = 0$ and the Stokes theorem. In a similar way we see that
\begin{equation}\label{see}
\int_{\partial \mathcal{B}} l_i[\xi_{nr}]n^i dS(X) = 0 .
\end{equation}
To prove the last claim in lemma 5.2, we note that
when $l_i[\xi] = 0$, it follows that
\begin{equation}
\int_{\partial \mathcal{B}} l_i[\xi_{r}]n^i dS(X) = - \int_{\partial \mathcal{B}} l_i[\xi_{nr}]n^i dS(X) = 0 .
\end{equation}
But by spherical symmetry $l_i [\xi_{r}]$ is proportional to $n_i$, so that
$l_i [\xi_{r}] = l_i[\xi_r + \xi_{nr}]$ is zero.
\end{proof}

By the previous lemma, we have
\begin{equation}\label{separate}
(\xi, L[\xi]) = (\xi_{r}, L [\xi_{r}]) + (\xi_{nr}, L [\xi_{nr}])
\end{equation}
and if $l_i[\xi] = 0$,
 \begin{equation}\label{both}
(\xi, L[\xi]) = (\xi_{r}, L [\xi_{r}]) + (\xi_{nr}, L [\xi_{nr}]), \quad
\text{ with } l_i[\xi_r] = 0 = l_i[\xi_{nr}] .
\end{equation}

Our next result concerns
the  first term in (\ref{both}).
Recall that
$$
\mr{\gamma} = \frac{\mathring{\rho}}{\mathring{p}}\,\frac{d \mathring{p}}{d
  \mathring{\rho}}
$$
is the adiabatic index of the reference body, cf. \eqref{eq:gammadef}.
\begin{lemma} \label{lem:radial}
Let $\xi^i = F(R) n^i$ with $F$ a positive, smooth function defined on $[0,
  R_0]$. Assume $l_i[\xi] = 0$,  Then
\begin{equation}\label{abc-new}
-(\xi, L[\xi]) =
4 \pi \int_0^{R_0} \left [  (a R^2 F'^2 + 2 b R F F' + c F^2)
+ \frac{8\mr{\rho}\,\mr{l}}{3}\left(F' - \frac{F}{R}\right)^2 \right]dR\,,
\end{equation}
where
\begin{equation}\label{abc2}
a = \mr{p} \mr{\gamma}, \quad b = 2\mr{p} (\mr{\gamma} - 2), \quad c = 4
\mr{p}(\mr{\gamma} -1) .
\end{equation}
\end{lemma}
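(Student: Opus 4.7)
\medskip
\noindent\textbf{Proof plan.} The strategy is to use the alternative form \eqref{id7}--\eqref{rewrite1} of the linearized operator from Proposition \ref{prop:alternative}, take the $L^{2}$ pairing $-(\xi,L[\xi])$, and integrate by parts. The advantage of \eqref{id7} is that the pressure and gravitational contributions are bundled into the gradient of a single scalar
\[
W \;:=\; \frac{1}{\mr{\rho}}\,\frac{d\mr{p}}{d\mr{\rho}}\,\delta_{\xi}\mr{\rho} \;+\; \delta_{\xi}\mr{U},
\]
whereas the elastic piece stands apart and can be handled on its own by a separate integration by parts.

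For the radial field $\xi^{i}=F(R)\,n^{i}$, one computes the kinematic quantities
\[
\xi^{j}{}_{,j} \;=\; F' + \frac{2F}{R}, \qquad
\delta_{\xi}\mr{\rho} \;=\; \frac{1}{R^{2}}\bigl(R^{2}\mr{\rho}F\bigr)', \qquad
(\delta_{\xi}\mr{\sigma})_{ij} \;=\; \Bigl(F' - \frac{F}{R}\Bigr)\Bigl(n_{i}n_{j} - \frac{1}{3}\delta_{ij}\Bigr),
\]
so that $(\delta_{\xi}\mr{\sigma})^{ij}(\delta_{\xi}\mr{\sigma})_{ij} = \tfrac{2}{3}(F'-F/R)^{2}$. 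Integrating $-4\,\xi^{i}\,\partial^{j}\bigl[\mr{\rho}\mr{l}(\delta_{\xi}\mr{\sigma})_{ij}\bigr]$ by parts produces, in the volume, precisely the elastic integrand $\tfrac{8}{3}\mr{\rho}\mr{l}(F'-F/R)^{2}$ of \eqref{abc-new}, plus a surface contribution on $\partial\Bo$ proportional to $F(R_{0})(F'-F/R)(R_{0})$.

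Integration by parts of the fluid-gravitational term against $\xi^{i}$ yields
\[
-\int_{\Bo}\xi^{i}\mr{\rho}\,\partial_{i}W\,dX
\;=\; \int_{\Bo}(\delta_{\xi}\mr{\rho})\,W\,dX \;-\; 4\pi R_{0}^{2}\,\mr{\rho}_{0}F(R_{0})\,W(R_{0}).
\]
The hypothesis $l_{i}[\xi]=0$, combined with \eqref{rewrite1} and $\mr{p}(R_{0})=0$, lets one express the residual surface contribution solely in terms of $\mr{U}'(R_{0})F(R_{0})^{2}$; a useful auxiliary fact (via Newton's shell theorem applied to the spherical source $\delta_{\xi}\mr{\rho}$, together with the compensating single-layer term in the alternative form of $\delta_{\xi}\mr{U}$ in Proposition \ref{prop:alternative}) is that $\delta_{\xi}\mr{U}(R_{0})=0$.

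The main obstacle is the reduction of the bulk integral $\int_{\Bo}(\delta_{\xi}\mr{\rho})\,W\,dX$ to the local quadratic form in $F,F'$ appearing in \eqref{abc-new}. Expanding $W$ produces a pressure self-term, which is directly quadratic in $F,F'$ upon inserting $\delta_{\xi}\mr{\rho}=\mr{\rho}'F+\mr{\rho}(F'+2F/R)$, together with the gravitational cross-term $\int(\delta_{\xi}\mr{\rho})\,\delta_{\xi}\mr{U}\,dX$. Hydrostatic equilibrium $\mr{p}'=\mr{\rho}\mr{U}'$ rewrites the $\mr{\rho}'F$ contribution in terms of $\mr{U}'F$, which pairs with $\delta_{\xi}\mr{U}$; the radial Poisson identity $(R^{2}\mr{U}')'=4\pi G R^{2}\mr{\rho}$ and the explicit Newton-potential expression of $\delta_{\xi}\mr{U}$ for a spherical source then allow the gravitational cross-term to combine with this pressure contribution and with the residual boundary term (via one further integration by parts in $R$) into a purely local bulk integrand. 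Collecting coefficients of $R^{2}(F')^{2}$, $RFF'$, and $F^{2}$ gives exactly $a=\mr{p}\mr{\gamma}$, $b=2\mr{p}(\mr{\gamma}-2)$, $c=4\mr{p}(\mr{\gamma}-1)$ as in \eqref{abc2}, reproducing the classical Chandrasekhar radial-pulsation quadratic form (cf.~\cite{stellar}) augmented by the elastic piece already identified.
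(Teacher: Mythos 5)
Your argument is correct, but it takes a genuinely different route from the paper. The paper works directly from the original form \eqref{lambda5}: pairing with $\xi^i$ and symmetrizing gives the quadratic form \eqref{lambda7}, the gravitational double integral is then evaluated for a radial field by the explicit shell-theorem identities \eqref{idrad}--\eqref{contribution1} (yielding the single local term $2G\mr{\rho}\,\mathring{\Mass}F^2/R^3$), and the result \eqref{Brad} is converted to \eqref{abc-new} using hydrostatic equilibrium \eqref{back1} and one integration by parts with $\mr{p}(R_0)=0$. You instead start from the alternative form \eqref{id7}--\eqref{rewrite1} of Proposition \ref{prop:alternative}, which is essentially the specialization to radial fields of the ``Chandrasekhar energy'' \eqref{general}; the paper explicitly remarks after \eqref{general} that this route ``of course coincides'' with \eqref{abc-new} but that the direct derivation is simpler. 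The comparison bears that out: your route buys a cleaner conceptual split (a single scalar $W$ carrying the fluid and gravitational physics, plus a separately integrable elastic piece, matching the structure of Proposition 5.4), but at the price of more boundary bookkeeping --- three surface terms (the elastic one, the fluid one controlled by $l_i[\xi]=0$ via \eqref{rewrite1}, and the one generated when the $\mr{\rho}'F^2$ contribution is integrated by parts against $(R^2\mr{U}')'=4\pi GR^2\mr{\rho}$) must all cancel, whereas the paper has only the single $\mr{p}(R_0)=0$ boundary term at the end. I checked that your cancellations do close: $\delta_\xi\mr{U}(R_0)=0$ holds for radial $\xi$ (zero total source for the combined volume-plus-layer potential), the elastic surface term cancels against the elastic part of $W(R_0)$ supplied by $l_i[\xi]=0$, and the remaining $4\pi R_0^2\mr{\rho}_0\mr{U}'(R_0)F(R_0)^2$ is absorbed by the stated integration by parts, after which the coefficients collect to \eqref{abc2} exactly as you claim.
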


\begin{remark}\label{rem:stability}
The elastic term in \eqref{abc-new} is clearly non-negative, and zero only
for a dilation $\xi^i \partial_i = R\partial_R$. The integrand
originating from the fluid and gravitational terms
is a quadratic form in $(RF',F)$ with determinant
$ac-b^2 = 4 \mathring{p}^2(3 \gamma -
4)$, and trace $a+c = \mr{p} [ 2\mr{\gamma} + (3\mr{\gamma} - 4)]$.
Thus, with $\mr{p} > 0$, the condition
\begin{equation}\label{eq:gamma-stab}
3\mr{\gamma} - 4 > 0
\end{equation}
is a sufficient, but by no means necessary, condition
for \eqref{abc-new} to be positive\footnote{The role of $\mr{\gamma}$ for the stability of perfect fluids
has been known since the late 19th century, see \cite{stellar}.}.
\end{remark}

\begin{proof}
Integrate $L_i[\xi]$ over $\mathcal{B}$ against $\xi^i$, using the form (\ref{lambda5}). The result, using $l_i[\xi] = 0$, is
\begin{multline}\label{lambda7}
 - (\xi^i, L \xi) = \int_\mathcal{B}\left[\left(- \mathring{p} + \mathring{\rho}\, \frac{d \mathring{p}}{d \mathring{\rho}}\right)
(\xi ^j{}_{,j})^2 +
\mathring{p}\,\xi_{i,j}\, \xi^{j,i} + 4 \mathring{\rho}
\,\mathring{l}\,(\delta_\xi \mr{\sigma})_{ij}(\delta_\xi \mr{\sigma})^{ij} \right]dX - \\
- G \int_{\mathcal{B}\times\mathcal{B}}\!\!\!\!\! \mathring{\rho}(X)\mathring{\rho}(X')
\left(\partial_i \partial_j \frac{1}{|X-X'|}\right) \xi^i(X)\,[\xi^j(X) - \xi^j(X')]\,dX dX' .
\end{multline}
In order to carry out the integration with respect to $X'$
in the second term in the last line of (\ref{lambda7}) with
$\xi^i (X) = F(R)\, n^i$,
we make use of the identity
\begin{equation}\label{idrad}
- \,\partial_i \partial_j \int_\mathcal{B} \frac{\mathring{\rho}(R')\,\xi^j(X')}{|X - X'|} \,dX' = 4 \pi \mathring{\rho}(R) \,\xi_i(X) .
\end{equation}
To see that this holds, we note that the divergence of the left side is equal to that of the right side, and make use of the fact that there
is no spherically symmetric, divergence-free, non-vanishing  vector field on $\mathcal{B}$
which is regular at the origin.
We also have that
\begin{equation}\label{m}
\partial_i \int_\mathcal{B} \frac{\mathring{\rho}(R')}{|X - X'|}\, dX' = -
\frac{n_i}{R^2} \,\mathring{\Mass}(R)\,,
\end{equation}
where $\mathring{\Mass}(R) = 4 \pi \int_0^R R'^2 \mathring{\rho}(R')\, dR'$, and
\begin{equation}\label{contribution}
\partial_i \partial_j \int_\mathcal{B} \frac{\mathring{\rho}(R')}{|X - X'|}\, dX' = - \frac{\delta_{ij} - 3 n_i n_j}{R^3}\,\mathring{\Mass}(R)
- 4 \pi \mathring{\rho}\, n_i n_j ,
\end{equation}
so that
\begin{multline}\label{contribution1}
G \,\mathring{\rho}(R) \int_{\mathcal{B}}\left(\partial_i \partial_j \frac{\mathring{\rho}(R')}{|X-X'|}\right)
[\xi^j(X) - \xi^j(X')]\,dX' =\\
= - \,G \,\mathring{\rho}(R)\, \xi^j \,\frac{\delta_{ij} - 3 n_i n_j}{R^3}\, \mathring{\Mass}(R) = 2 G \,\frac{\mathring{\rho}(R)
 \mathring{\Mass}(R)}{R^3}\,\xi_i .
\end{multline}
Here we have used that the contribution of the last term in (\ref{contribution}) cancels that from (\ref{idrad}). Evaluating the first line in
(\ref{lambda7}) is of course completely straightforward. Thus for the non-elastic part of (\ref{lambda7}) we find
the expression
\begin{equation}\label{Brad}
4 \pi \!\!\int_0^{R_0}\!\!\!\! R^2 \left[\left(\mathring{\rho}\,\frac{d \mathring{p}}{d \mathring{\rho}} - \mathring{p}\right)
\left(F' + \frac{2 F}{R}\right)^2 +
\mathring{p}
\left(F'^2 + \frac{2 F^2}{R^2}\right)
- \frac{2 G \mathring{\rho}\,\mathring{\Mass} F^2}{R^3}\right]\!\! dR  .
\end{equation}
But we know that
\begin{equation}\label{back1}
\mathring{p}' = - \frac{G \,\mathring{\rho}\,\mathring{\Mass}}{R^2} \, ,
\end{equation}
cf. \eqref{eq:dpdr}.
Inserting (\ref{back1}) into (\ref{Brad}) and integrating the last term by
parts using $\mathring{p}\,(R_0) = 0$
gives the result.
\end{proof}
Motivated by lemma \ref{lem:radial} and the discussion in remark \ref{rem:stability}, we make the following definition. 
\begin{definition}\label{def:rad-stab}
The reference body satisfies the \emph{radial stability} condition if \eqref{eq:gamma-stab} holds, or more generally, if the quadratic form in \eqref{abc-new} is positive definite.
\end{definition}
We now turn to the non-radial modes. These are analyzed using the form (\ref{id7}) and (\ref{rewrite1}) for
$L_i[\xi]$ respectively $l_i[\xi]$, for a non-radial vector field $\xi^i$,
under the assumption $l_i[\xi] =0$.

Define
\begin{equation} \label{new}
\delta_\xi \mr{\mu} = \left(\frac{d \mr{p}}{d \mr{\rho}}\,
\delta_\xi \mr{\rho} + 4 \mr{\rho}\, \mr{l} \,(\delta_\xi \mr{\sigma})_{ij}\, n^i n^j \right)\!\!\bigg{|}_{\partial \mathcal{B}}\,,
\end{equation}
Then, cf. \eqref{lambda5'}, the condition $l_i[\xi] = 0$, implies
\begin{equation}\label{also}
l_i[\xi] n^i = \mr{\rho}\, \mr{U}' \xi^i n_i|_{\partial \mathcal{B}} + \delta_\xi \mr{\mu} = 0 .
\end{equation}
Further, let $\mathfrak{U}$ and $\mathfrak{u}$ denote the Newtonian volume
and single layer potentials, respectively, i.e.
\begin{align}\label{bulk}
\mathfrak{U}[f] &=  -\, G\int_{\mathcal{B}}\!\frac{f(X')}{|X-X'|}\,dX' ,
\\
\intertext{and}
\label{nonbulk}
\mathfrak{u}[f] &=  - G\int_{\partial \mathcal{B}}\!\frac{f(X')}{|X-X'|}\,dS(X') .
\end{align}
For convenience, we have not included the factor $1/4\pi$ here.
Then we can write (see proposition \ref{prop:alternative})
\begin{equation}\label{mathfrak}
\delta_\xi \mr{U} = \mathfrak{U}\,[\delta_\xi \mr{\rho}] - \mathfrak{u}[\mr{\rho}\,\xi^i n_i] =
\mathfrak{U}\,[\delta_\xi \mr{\rho}] + \mathfrak{u}\left[\frac{\delta_\xi \mr{\mu}}{\mr{U}'}\right] -
\mathfrak{u}\left[\frac{l_i[\xi] n^i}{\mr{U}'}\right] .
\end{equation}
The following proposition, which is proved starting from \ref{id7} by a straightforward partial integration and making use of the above definitions,
provides a formula for an expression which is essentially the second variation
at the reference deformation $\phi = \id$ of the material form of the action given in equation \eqref{eq:action-material}.
\begin{prop}
Let $\xi^i$ be a vector field such that $l_i[\xi] = 0$.
Then we have
\begin{multline}\label{general}
- (\xi, L[\xi]) = \int_{\mathcal{B}}\frac{1}{\mr{\rho}}\frac{d \mr{p}}{d\mr{\rho}}
(\delta_\xi \mr{\rho})(\delta_\xi\mr{\rho}) dX + \int_{\partial \mathcal{B}} \frac{1}{\mr{\rho}\,\mr{U}'}(\delta_\xi \mr{\mu})(\delta_\xi \mr{\mu}) dS(X) + \\
+ \int_{\mathcal{B}} (\delta_\xi \mr{\rho})\left(\mathfrak{U}\,[\delta_\xi \mr{\rho}] + \mathfrak{u}\left[\frac{\delta_\xi \mr{\mu}}{\mr{U}'}\right]\right) d X +
\int_{\mathcal{\partial B}}\frac{1}{\mr{U}'} \,(\delta_\xi \mr{\mu})\left(\mathfrak{U}\,[\delta_\xi \mr{\rho}] + \mathfrak{u}\left[\frac{\delta_\xi \mr{\mu}}{\mr{U}'}\right]\right) dS(X) + \\
+ 4 \int_{\mathcal{B}}\!\!\mr{\rho}\,\mr{l}\, (\delta_\xi \mr{\sigma})_{ij}(\delta_\xi \mr{\sigma})^{ij} dX .
\end{multline}
\end{prop}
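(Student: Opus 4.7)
The plan is to substitute the alternative form \eqref{id7} of $L_i[\xi]$ into $-(\xi,L[\xi]) = -\int_{\mathcal{B}} \xi^i L_i[\xi]\,dX$ and integrate each term by parts. The first term of \eqref{id7} is in pure divergence form with the outer factor $\mr{\rho}$, and the second is a straight divergence. The crucial algebraic identity driving the whole reduction is
\begin{equation*}
\partial_i(\mr{\rho}\,\xi^i) = \delta_\xi \mr{\rho},
\end{equation*}
which means that after integrating the first term of \eqref{id7} against $\xi^i$ by parts, the factor $\xi^i \mr{\rho}$ gets turned into $\delta_\xi \mr{\rho}$, producing the bulk quadratic term $\int_{\mathcal{B}} \mr{\rho}^{-1}(d\mr{p}/d\mr{\rho})(\delta_\xi\mr{\rho})^2\,dX$ and the bulk mixed term $\int_{\mathcal{B}}(\delta_\xi\mr{\rho})(\delta_\xi \mr{U})\,dX$. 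For the elastic term $4\partial^j[\mr{\rho}\mr{l}(\delta_\xi\mr{\sigma})_{ij}]$ integrated against $\xi^i$, one integration by parts produces the bulk contribution $4\int \mr{\rho}\mr{l}\,(\delta_\xi \mr{\sigma})_{ij}\xi^{(i,j)}\,dX$; since $(\delta_\xi\mr{\sigma})_{ij}$ is symmetric and trace-free, only the symmetric trace-free part of $\xi^{(i,j)}$ contributes, yielding exactly $4\int\mr{\rho}\mr{l}(\delta_\xi\mr{\sigma})_{ij}(\delta_\xi\mr{\sigma})^{ij}\,dX$.

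To turn the bulk term $\int(\delta_\xi\mr{\rho})(\delta_\xi \mr{U})$ into Term~3 of the statement, I would invoke \eqref{mathfrak}, which, under the hypothesis $l_i[\xi]=0$, reduces to $\delta_\xi \mr{U}= \mathfrak{U}[\delta_\xi\mr{\rho}] + \mathfrak{u}[\delta_\xi\mr{\mu}/\mr{U}']$. The same substitution will handle the analogous boundary mixed term and produce Term~4.

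The step that requires more care, and which I expect to be the main obstacle in bookkeeping, is assembling the boundary contributions into the single clean term $\int_{\partial\mathcal{B}}(\mr{\rho}\mr{U}')^{-1}(\delta_\xi\mr{\mu})^2\,dS$. The two integrations by parts yield boundary integrands $-\xi^i n_i (\tfrac{d\mr{p}}{d\mr{\rho}}\delta_\xi\mr{\rho}+\mr{\rho}\delta_\xi\mr{U})$ from the fluid/gravity part and $-4\mr{\rho}\mr{l}(\delta_\xi\mr{\sigma})_{ij}\xi^i n^j$ from the elastic part. One must use \emph{both} components of the vanishing of $l_i[\xi]$ as given by \eqref{rewrite1}. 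The normal component yields $\xi^i n_i = -\delta_\xi\mr{\mu}/(\mr{\rho}\mr{U}')$, which converts the fluid boundary term into the desired form. The tangential component of $l_i[\xi]=0$ forces $(\delta_\xi\mr{\sigma})_{ij}n^j$ to be parallel to $n_i$ on $\partial\mathcal{B}$, so that $(\delta_\xi\mr{\sigma})_{ij}n^j = ((\delta_\xi\mr{\sigma})_{kl}n^k n^l)\,n_i$. This allows the elastic boundary integrand to be rewritten as $-4\mr{\rho}\mr{l}((\delta_\xi\mr{\sigma})_{kl}n^k n^l)(\xi^i n_i)$; substituting the normal-component relation again, and recognizing that $\tfrac{d\mr{p}}{d\mr{\rho}}\delta_\xi\mr{\rho}+4\mr{\rho}\mr{l}(\delta_\xi\mr{\sigma})_{kl}n^k n^l=\delta_\xi\mr{\mu}$ by definition \eqref{new}, collapses the combined fluid-plus-elastic boundary contribution to $\int_{\partial\mathcal{B}}(\delta_\xi\mr{\mu})^2/(\mr{\rho}\mr{U}')\,dS$, which is Term~2. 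The leftover boundary piece $-\int_{\partial\mathcal{B}}\xi^i n_i \mr{\rho}\delta_\xi\mr{U}\,dS$ becomes $\int_{\partial\mathcal{B}}(\delta_\xi\mr{\mu}/\mr{U}')\delta_\xi\mr{U}\,dS$, and expanding $\delta_\xi\mr{U}$ by \eqref{mathfrak} as before gives Term~4. Collecting all pieces yields \eqref{general}.
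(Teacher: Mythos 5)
Your proposal is correct and is exactly the route the paper takes: the paper's "proof" consists of the single remark that \eqref{general} follows from \eqref{id7} "by a straightforward partial integration and making use of the above definitions", and you have supplied precisely those details, including the essential use of $(\mr{\rho}\xi^i)_{,i}=\delta_\xi\mr{\rho}$, the substitution \eqref{mathfrak} for $\delta_\xi\mr{U}$ when $l_i[\xi]=0$, and the collapse of the boundary terms via \eqref{rewrite1}, \eqref{new} and \eqref{also}. (A marginally shorter bookkeeping for the boundary is to note that $l_i[\xi]=0$ directly gives $\frac{d\mr{p}}{d\mr{\rho}}\delta_\xi\mr{\rho}\,n_i+4\mr{\rho}\mr{l}(\delta_\xi\mr{\sigma})_{ij}n^j=-\mr{\rho}\mr{U}'(\xi^jn_j)n_i$, avoiding the separate tangential-component argument, but your version is equally valid since $\mr{l}>0$ on $\overline{\Bo}$.)
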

\begin{remark}
In addition to the elastic ``bulk'' contribution (third term
in the first line in (\ref{general})), there are elastic contributions to all boundary terms in (\ref{general}) resp. (\ref{nonbulk}) via $\delta_\xi \mr{\mu}$.
In the absence of elasticity and when $\mr{\rho}(R_0)=0$, after
insertion of (\ref{also}), the expression (\ref{general}) boils down to the
famous ``Chandrasekhar energy'', cf. \cite[(5-49)]{B&T:1987gady.book.....B}.
Specializing to the radial case with $l_i[\xi] = 0$, the expression \eqref{general} of course
coincides with \eqref{abc-new}. For this case it turns out to be simpler to derive
\eqref{abc-new} directly.
\end{remark}
The following result is a generalization
of the Antonov-Lebovitz theorem, cf. \cite[\S 5.2]{B&T:1987gady.book.....B} which allows for the
the case where $\mr{\rho}(R_0) > 0$.
\begin{prop} \label{prop:nonradial}
Let $\xi^i$ be a non-radial vector field such that $l_i[\xi] = 0$.
Then the quadratic form $- (\xi, L \xi)$ is non-negative, and zero
if and only if $\xi^i$ is a Euclidean Killing vector field.
\end{prop}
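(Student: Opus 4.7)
The plan is to start from the expression for $-(\xi, L[\xi])$ provided by the preceding proposition, and to show that the five contributions combine to a non-negative quadratic form whose null space on non-radial fields consists exactly of the six-dimensional space of Euclidean Killing vectors.

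First I would observe that three of the five terms on the right-hand side of \eqref{general} are manifestly non-negative: the bulk pressure term $\int_{\mathcal{B}}\frac{1}{\mr{\rho}}\frac{d\mr{p}}{d\mr{\rho}}(\delta_\xi\mr{\rho})^2\,dX$ by $d\mr{p}/d\mr{\rho} > 0$; the boundary quadratic form $\int_{\partial\mathcal{B}}\frac{1}{\mr{\rho}\,\mr{U}'}(\delta_\xi\mr{\mu})^2\,dS$ by $\mr{U}'(R_0) > 0$; and the elastic bulk contribution $4\int_{\mathcal{B}}\mr{\rho}\,\mr{l}\,(\delta_\xi\mr{\sigma})_{ij}(\delta_\xi\mr{\sigma})^{ij}\,dX$ by $\mr{l} > 0$. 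The two remaining (cross) terms, using the identity $\delta_\xi\mr{U} = \mathfrak{U}[\delta_\xi\mr{\rho}] + \mathfrak{u}[\delta_\xi\mr{\mu}/\mr{U}']$ from \eqref{mathfrak} (which holds because $l_i[\xi]=0$), recombine to the gravitational self-energy of the perturbation. Treating $\delta_\xi\mr{U}$ as the unique decaying Newtonian potential generated by the bulk source $\delta_\xi\mr{\rho}$ in $\mathcal{B}$ and the single-layer density $\delta_\xi\mr{\mu}/\mr{U}'$ on $\partial\mathcal{B}$, one partial integration rewrites the sum of these two cross terms as $-\frac{1}{4\pi G}\int_{\mathbb{R}^3}|\nabla\delta_\xi\mr{U}|^2\,dX$, which is non-positive. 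This attractive contribution is the only obstruction to positivity.

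The core of the argument is then a multipole decomposition. Because $\mr{\rho},\mr{p},\mr{U},\mr{l}$ are all radial, the quadratic form \eqref{general} decouples across vector spherical harmonic sectors indexed by $l \geq 0$. The non-radial hypothesis $\int_{\partial B(R)}\xi^i n_i\,dS = 0$ precisely eliminates the $l=0$ sector of $\delta_\xi\mr{\rho}$ (and hence of $\delta_\xi\mr{U}$), so for each $l \geq 1$ the problem reduces to a one-dimensional estimate in the radial profiles. Expanding the Newton kernel as the classical sum in $\min(R,R')^l/\max(R,R')^{l+1}$, the $l$-th multipole of the gravitational self-energy is to be controlled by the corresponding multipoles of the pressure and elastic contributions plus the boundary term. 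For $l \geq 2$ one expects strict positivity, the centrifugal contribution in the tangential gradients of $\xi$ providing the margin over the gravitational attraction; for $l = 1$ saturation is possible and selects exactly the one-parameter family of radial profiles coming from rigid translations $\xi^i = c^i$.

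The principal obstacle, not present in the classical Antonov--Lebovitz argument in \cite[\S 5.2]{B&T:1987gady.book.....B} where $\mr{\rho}(R_0) = 0$, is the genuine presence of the boundary density $\mr{\rho}(R_0) > 0$, which activates both the single-layer term in $\delta_\xi\mr{U}$ and the boundary quadratic form in \eqref{general}. Handling it will require the jump relations for the normal derivative of the single-layer potential $\mathfrak{u}$ across $\partial\mathcal{B}$, combined with the positivity of $1/(\mr{\rho}\,\mr{U}')$ there, in order to absorb the mixed bulk--boundary cross terms into the positive surface term before proceeding to the multipole reduction. Once non-negativity is established, vanishing of each $l \geq 2$ sector forces the corresponding multipole of $\xi$ to vanish, saturation in the $l=1$ sector forces $\xi$ to be an infinitesimal translation, and rotations, for which $\delta_\xi\mr{\rho}$, $\delta_\xi\mr{\mu}$, $\delta_\xi\mr{\sigma}$ and $\delta_\xi\mr{U}$ all vanish identically by spherical symmetry of the background, account for the remaining three-dimensional part of the kernel, exhibiting the full space of Euclidean Killing fields.
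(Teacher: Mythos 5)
Your outline correctly identifies the structure of \eqref{general} --- three manifestly non-negative terms, with the two cross terms recombining into the (negative) gravitational self-energy $-\frac{1}{4\pi G}\int_{\Re^3}|\nabla\delta_\xi\mr{U}|^2\,dX$ --- and correctly flags that $\mr{\rho}(R_0)>0$ activates the layer potential and that jump relations will be needed. But the heart of the proposition, namely that this negative term is dominated by the positive ones, is asserted rather than proved: ``one expects strict positivity'' is precisely the content of the Antonov--Lebovitz theorem and is the entire difficulty. Worse, the mechanism you propose for the margin --- ``the centrifugal contribution in the tangential gradients of $\xi$'' --- cannot work, because the non-elastic part of \eqref{general} depends on $\xi$ only through the pair $(\delta_\xi\mr{\rho},\delta_\xi\mr{\mu})$; tangential derivatives of $\xi$ enter only via the elastic term $(\delta_\xi\mr{\sigma})_{ij}$, which is not available to offset the gravitational attraction of the fluid part (indeed the statement must hold with the elastic constant $\mr{l}$ arbitrarily small, and the pure-fluid version is exactly Antonov--Lebovitz). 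The actual source of positivity in the paper is different: one recasts the non-elastic part as $\langle f|(\Id+\mathbb{V})f\rangle$ on the Hilbert space $\mathbb{H}$ of pairs $(\delta\mr{\rho},\delta\mr{\mu})$ with zero spherical mean, where $\mathbb{V}$ is the compact self-adjoint operator built from $\mathfrak{U}$ and $\mathfrak{u}$, so that non-negativity reduces to the spectral bound $\lambda\le 1$ for eigenvalues $-\lambda$ of $\mathbb{V}$. That bound is obtained by the substitution $\delta_\xi\mr{U}=\mr{U}'\sum_l s_l Y_l$, which converts the eigenvalue equation into the ODE \eqref{eigenl} whose integrated form \eqref{eigenint} carries the non-negative coefficient $l^2+l-2=(l-1)(l+2)$ and, after inserting the matching condition \eqref{insert} coming from the jump of $(\delta_\xi\mr{U})'$ across $\partial\Bo$, the non-negative boundary term $R_0(l-1)(\mr{U}'^2 s_l^2)(R_0)$. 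None of this machinery (the operator $\mathbb{V}$, the substitution by $\mr{U}'$, the resulting Rellich-type identity) appears in your proposal, so the crucial inequality is left unestablished.

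A second, smaller gap is in the equality analysis. You claim that vanishing of the $l\ge 2$ sectors ``forces the corresponding multipole of $\xi$ to vanish''; this is false for the fluid part alone, whose kernel on non-radial fields is infinite-dimensional (all $\xi$ with $\delta_\xi\mr{\rho}=0$ and vanishing normal component). The correct chain is: equality forces the elastic term to vanish, hence $(\delta_\xi\mr{\sigma})_{ij}=0$ and $\xi$ is a \emph{conformal} Killing field; saturation of the spectral bound forces $\delta_\xi\mr{\rho}=c^i\mr{\rho}_{,i}$ and $\delta_\xi\mr{\mu}=-\mr{\rho}\,\mr{U}'(c,n)|_{\partial\Bo}$; and dilations and conformal boosts are then excluded because they are incompatible with $\delta_\xi\mr{\rho}=c^i\mr{\rho}_{,i}$, leaving only translations and rotations. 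Your treatment conflates the fluid and elastic contributions at this step.
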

\begin{remark}
Our proof of proposition \ref{prop:nonradial}
follows closely the proof
valid
under the assumption $\mr{\rho}(R_0)=0$ given in the book of Binney and Tremaine, cf. \cite[Appendix 5.C]{B&T:1987gady.book.....B}.
The new edition, cf.  \cite[Appendix H]{Binney:new}
proves this result using a more direct argument due to \cite{aly}. We have not been able to generalize that argument to the case $\mr{\rho}(R_0) > 0$.
\end{remark}

\begin{proof}
Note first of all that the last (purely elastic) term in (\ref{general}) is non-negative and vanishes only on
infinitesimal conformal motions.
The remaining terms, on the other hand, depend only on pairs of functions, namely $(\delta_\xi \mr{\rho}, \delta_\xi \mr{\mu})$, on $\mathcal{B} \times \partial \mathcal{B}$. Our claim amounts to the fact that, when restricted to pairs $(\delta_\xi \mr{\rho}, \delta_\xi \mr{\mu})$ with zero spherical mean,
the sum of these terms is non-negative and zero iff $(\delta_\xi \mr{\rho}, \delta_\xi \mr{\mu}) =
(\mr{\rho}' (c,n), - \mr{\rho}\, \mr{U}' (c,n)|_{\partial \mathcal{B}})$ for some constant vector $c$.
To see that we take pairs $f = (\delta_\xi \mr{\rho},\delta_\xi \mr{\mu})$ to lie in
the Hilbert space $\mathbb{H}$ defined by
\begin{equation}\label{Hilbert}
\mathbb{H} = L^2\left(\mathcal{B}, \frac{1}{\mr{\rho}}\frac{d \mr{p}}{d \mr{\rho}}\,dX\right) \oplus L^2
\left(\partial \mathcal{B}, \frac{1}{\mr{\rho} \,\mr{U}'} \,dS(X)\right) ,
\end{equation}
with scalar product
\begin{equation}\label{scalar}
\langle f_1|f_2 \rangle =
\int_{\mathcal{B}} \frac{1}{\mr{\rho}}\frac{d \mr{p}}{d \mr{\rho}}\,\,(\delta_1 \rho)\, (\delta_2 \rho) \,dX
+ \int_{\partial \mathcal{B}} \frac{1}{\mr{\rho} \,\mr{U}'}\,(\delta_1 \mu)\,
(\delta_2 \mu) \,\,dS(X) ,
\end{equation}
and consider the operator $\mathbb{V}: \mathbb{H} \rightarrow \mathbb{H}$,
defined by
\begin{equation}
\mathbb{V} (\delta \mr{\rho},\delta \mr{\mu}) =
\left(\mathring{\rho} \frac{d \mr{\rho}}{d \mr{p}}\left(\mathfrak{U}\,[\delta_\xi \mr{\rho}] + \mathfrak{u}\left[\frac{\delta_\xi \mr{\mu}}{\mr{U}'}\right]\right), \mathring{\rho}\left(\mathfrak{U}\,[\delta_\xi \mr{\rho}] + \mathfrak{u}\left[\frac{\delta_\xi \mr{\mu}}{\mr{U}'}\right]|_{\partial \mathcal{B}}\right)\right) .
\end{equation}
The operator $\mathbb{V}$ is self-adjoint with respect to the scalar product
in (\ref{Hilbert}), due to the symmetry of the Poisson kernel.

By Lemma \ref{lem:VV}, the
volume potential $\mathfrak{U}$ defines a continuous map $L^2(\Bo) \to
H^2(\Bo)$. Further, the layer potential $\mathfrak{u}$ defines a continuous map
$L^2(\partial\Bo) \to H^1(\partial \Bo)$, cf. Lemma \ref{lem:SS}.
Since the scalar product
\eqref{scalar} defines a norm on $\mathbb{H}$ which is equivalent
equivalent to the standard norm on $L^2(\Bo) \times L^2(\partial\Bo)$, the
operator $\mathbb{V} : \mathbb{H} \to \mathbb{H}$ is compact.
Furthermore it maps $(\delta \mr{\rho},\delta \mr{\mu})$'s with vanishing spherical mean into themselves and its corresponding restriction
is also self-adjoint and compact. Now the expression in (\ref{general}) minus the elastic term take the form
\begin{equation}\label{form}
\langle f|(\mathbb{E} +  \mathbb{V}) f \rangle .
\end{equation}
Since the operator $\mathbb{V}$ is compact and self-adjoint, it has a
complete set
of eigenfunctions with real eigenvalues.
We must show that for $-\lambda$ in the spectrum of $\mathbb{V}$
it holds that $\lambda \leq 1$. Thus consider
\begin{equation}\label{bulk1}
\lambda\, \delta_\xi \mr{\rho} + \mr{\rho} \frac{d \mr{\rho}}{d \mr{p}}\, \delta_\xi \mr{U} = 0, \quad \text{in }\mathcal{B} ,
\end{equation}
and
\begin{equation}\label{bound}
\lambda\,\delta_\xi \mr{\mu} + (\mr{\rho}\,\delta_\xi \mr{U})|_{\partial \mathcal{B}}=0, \quad \text{in }\partial \mathcal{B} ,
\end{equation}
where $\delta_\xi U = \mathfrak{U}\,[\delta_\xi \mr{\rho}] + \mathfrak{u}\left[\frac{\delta_\xi \mr{\mu}}{\mr{U}'}\right]$ . We write (\ref{bulk1}) in the form
\begin{equation}\label{rebulk}
\Delta \delta_\xi \mr{U} + \frac{4 \pi G}{\lambda}\,\mr{\rho} \frac{d \mr{\rho}}{d \mr{p}}\, \delta_\xi \mr{U} = 0 .
\end{equation}
Furthermore we write $\delta_\xi \mr{U}$ in the form
\begin{equation}\label{delta U}
\delta_\xi \mr{U} = \mr{U}' \delta s = \mr{U}' \sum_l s_l(R)\, Y_l (\Omega)\,,
\end{equation}
where we suppress the index $m$ of spherical harmonics.

Using the radial derivative of $\mr{U}'' + \frac{2}{R} \mr{U}' = 4 \pi G \mr{\rho}$ to eliminate $\mr{U}'''$, it is straightforward to show that (\ref{rebulk}) can be written in the form
\begin{equation}\label{eigenl}
\frac{1}{R^2 \mr{U}'}(R^2 \mr{U}'^2 s_l')' - \frac{l^2 + l - 2}{R^2}\, \mr{U}' s_l - 4 \pi G\,\mr{\rho} \frac{d \mr{\rho}}{d \mr{p}}\,\mr{U}'
\left(1- \frac{1}{\lambda}\right) s_l = 0 .
\end{equation}
Integrating (\ref{eigenl}) over $(0,R_0)$ against $R^2 \mr{U}' s_l$ we find that
\begin{multline}\label{eigenint}
\int_0^{r_0} R^2 \mr{U}'^2 \left[ s_l'^2 + \left(\frac{l^2 + l - 2}{r^2} + 4 \pi G \,\mr{\rho} \frac{d \mr{\rho}}{d \mr{p}}\,
\left(1 - \frac{1}{\lambda}\right)\right)s_l^2\right] dR \\
- R_0^2 (\mr{U}'^2 s_l s_l')(R_0) = 0 .
\end{multline}
Next observe that the expression for $\delta_\xi \mr{U}$ makes sense both in the interior and exterior region, and that
$\delta_\xi \mr{U}$ is continuous across $\partial \mathcal{B}$ and $\delta_\xi \mr{U}'$ suffers a jump
\begin{equation}
[(\delta_\xi \mr{U})']_{\partial B(r_0)} = 4 \pi G \frac{\delta_\xi \mr{\mu}}{\mr{U}'(R_0)} = - \frac{4 \pi G}{\lambda} \,
\frac{\mr{\rho} \,\delta_\xi \mr{U}}{\mr{U}'}\bigg{|}_{\partial \mathcal{B}} =
- \frac{4 \pi G}{\lambda}\,\mr{\rho}\, \delta s|_{\mathcal{B}} ,
\end{equation}
where we have used (\ref{bound}) in the second equality. Since
\begin{equation}
[\mr{U}'']|_{\partial \mathcal{B}} = - 4 \pi G \mr{\rho}|_{\partial \mathcal{B}} ,
\end{equation}
it follows that
\begin{equation}
[\delta s']_{\partial \mathcal{B}}= 4 \pi G \left(1 - \frac{1}{\lambda}\right)\left(\frac{\mr{\rho} \,\delta s}{\mr{U}'}\right)\Big{|}_{\partial \mathcal{B}}
\end{equation}
This in turn implies that
\begin{equation}
[s_l'](R_0)= 4 \pi G \left(1 - \frac{1}{\lambda}\right)\left(\frac{\mr{\rho} \,s_l}{\mr{U}'}\right)(R_0) .
\end{equation}
But, by virtue of the multipole expansion in the exterior region, we also have that
\begin{equation}
\lim_{R \downarrow R_0}\left((\mr{U}'s_l)' + \frac{l+1}{R_0}\,\mr{U}' s_l\right)=0\,\Rightarrow
\lim_{R \downarrow R_0}\left(s_l' + \frac{l-1}{R_0}\,s_l\right)=0\,.
\end{equation}
Thus
\begin{equation}\label{insert}
\lim_{R \uparrow R_0}\left(s_l' + \frac{l-1}{R_0}\,s_l\right)=4 \pi G \left(\frac{1}{\lambda} - 1\right)\left(\frac{\mr{\rho} \,s_l}{\mr{U}'}\right)(R_0) .
\end{equation}
Inserting (\ref{insert}) into (\ref{eigenint}), there results
\begin{multline}
\int_0^{R_0}\!\!\!\!R^2 \mr{U}'^2 \left[ s_l'^2 + \frac{l^2 + l - 2}{R^2}\,s_l^2\right]dR +
R_0 (l-1)(\mr{U}'^2\,s_l^2)(R_0) =\\
= 4 \pi G \left(\frac{1}{\lambda} - 1\right)\left[\int_0^{R_0}\!\!\!\mr{\rho} \frac{d \mr{\rho}}{d \mr{p}}\,
s_l^2\, dR + R_0^2 (\mr{U}'\mr{\rho} \,s_l^2)(R_0)\right] .
\end{multline}
Thus (note that $\mr{U}'>0$ and, since $\delta s$ has zero spherical mean,
$l \geq 1$) it follows that $\lambda \leq 1$ and $\lambda =1$ implies that
$s_l = 0$ for $l>1$ and $s_1 = \mathrm{const}$. This in turn means that $\delta_\xi \mr{U} = \mr{U}'(c,n) = c^i \partial_i \mr{U}$ which implies
\begin{equation}
\Delta \delta_\xi \mr{U} = 4 \pi G \,\mr{\rho}'(c,n)\,.
\end{equation}
Thus $\delta_\xi \mr{\rho} = \mr{\rho}'(c,n)$. It now follows that $\mathbb{V}$ restricted to quantities with zero spherical mean has eigenvalues $\lambda \leq
1$ and $\lambda = 1$ implies that $\delta_\xi \mr{\rho} = c^i \mr{\rho}_{,i}$. Furthermore (\ref{bound}) now implies
$\delta_\xi \mr{\mu} = - \mr{\rho} \,\mr{U}' (c,n)|_{\partial \mathcal{B}}$.\\
So far $\xi$ itself was not involved in the argument. Going back to the definition (\ref{new}) we infer that
$(\delta_\xi \mr{\sigma}_{ij} n^i n^j|_{\partial \mathcal{B}} = 0$.
Now, due to the presence of the elastic term in (\ref{general}), it follows that $(\delta_\xi \mr{\sigma})_{ij} = 0$
everywhere in $\mathcal{B}$, whence $\xi$ is a conformal Killing vector. Dilations and conformal boosts are incompatible with $\delta_\xi \mr{\rho} = c^i \mr{\rho}_{,i}$.
The only remaining possibility is that $\xi^i = c^i + \Omega^i{}_j X^j$ with $\Omega_{ij} = \Omega_{[ij]}$.
\end{proof}

Summing up we obtain the
\begin{thm}\label{thm:kernel}
Assume that the radial stability condition, cf. definition \ref{def:rad-stab}, holds.
Then the nullspace of $L_i$,
under the condition that $l_i$ be zero consists exactly of infinitesimal
Euclidean motions.
\end{thm}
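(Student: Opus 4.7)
The plan is to combine the two key positivity results already in hand --- Lemma \ref{lem:radial} for radial perturbations and Proposition \ref{prop:nonradial} for non-radial perturbations --- via the radial/non-radial decomposition introduced just before \eqref{separate}. Suppose $\xi$ lies in the kernel, so $L_i[\xi] = 0$ in $\Bo$ and $l_i[\xi] = 0$ on $\partial\Bo$. Pairing $\xi$ with $L[\xi]$ in $L^2$ gives $(\xi, L[\xi]) = 0$, and by \eqref{both} this decomposes as
\begin{equation*}
0 = (\xi_r, L[\xi_r]) + (\xi_{nr}, L[\xi_{nr}]),
\end{equation*}
where $\xi = \xi_r + \xi_{nr}$ is the splitting into radial and non-radial parts, and both $l_i[\xi_r]$ and $l_i[\xi_{nr}]$ vanish.

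Each summand is non-positive: by Lemma \ref{lem:radial} combined with the radial stability assumption (Definition \ref{def:rad-stab}), $(\xi_r, L[\xi_r]) \leq 0$, and by Proposition \ref{prop:nonradial}, $(\xi_{nr}, L[\xi_{nr}]) \leq 0$. Since they sum to zero, both must vanish separately. Positive definiteness of the radial quadratic form \eqref{abc-new} then forces $F \equiv 0$ in the representation $\xi_r^{\,i} = F(R)\, n^i$, so $\xi_r = 0$. Proposition \ref{prop:nonradial} asserts that in its equality case $\xi_{nr}$ is an infinitesimal Euclidean Killing field, i.e., a sum of a translation and a rotation, $\xi^i = c^i + \Omega^i{}_j X^j$ with $\Omega_{ij} = -\Omega_{ji}$; combining with $\xi_r = 0$ yields the claim.

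The converse inclusion --- that every infinitesimal Euclidean motion lies in the kernel --- is immediate from the Euclidean invariance of the action \eqref{eq:action-euler}, as already remarked following Proposition \ref{prop:alternative}. The theorem thus amounts to packaging the two preceding results; no new obstacle arises at this stage, since the substantive analytical content has already been absorbed into Lemma \ref{lem:radial} and, above all, into Proposition \ref{prop:nonradial}, whose spectral argument for the compact self-adjoint operator $\mathbb{V}$ on the Hilbert space $\mathbb{H}$ is the real core of the matter.
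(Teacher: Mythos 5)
Your argument is correct and is essentially the paper's own: the theorem is obtained by splitting $\xi$ into radial and non-radial parts via \eqref{both}, applying Lemma \ref{lem:radial} together with the radial stability condition to kill $\xi_r$, and invoking the equality case of Proposition \ref{prop:nonradial} to identify $\xi_{nr}$ with a Euclidean Killing field, the converse inclusion following from Euclidean invariance as noted after Proposition \ref{prop:alternative}. No further comment is needed.
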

As a final remark note that the Antonov-Lebovitz (i.e. non-radial-non-radial) part of the previous argument does not require any
condition on the background equation of state involving $\gamma$. It implies in particular that the kernel of the pure fluid
linearized operator acting on non-radial modes is trivial, i.e. only consists of $\xi$'s being translations or satisfying $\delta_\xi \mr{\rho} = 0$.
This in turn implies that there are no nontrivial  nonspherical perturbations of a Newtonian perfect-fluid star
with given equation of state - which in turn is a linearized version of the classical spherical-symmetry result of Lichtenstein.

\section{Fredholm alternative}
In this section we consider the operator
\newcommand{\Lbb}{\mathbb L}
\newcommand{\Lbbloc}{\mathbb L^{loc}}
\newcommand{\Lloc}{L^{loc}}
\newcommand{\Abb}{\mathbb A}
\newcommand{\Zbb}{\mathbb Z}
\newcommand{\Zcal}{\mathcal Z}
$$
\Lbb: \xi \to (-L[\xi], l[\xi]) .
$$
We make use of the results and methods of \cite[chapitre 2]{lions:magenes:vol1}.

Let $\Lloc$ denote the local part of the operator $L$, corresponding to the
first line of \eqref{lambda5}. Then
$\xi \to \Lbbloc \xi = (-\Lloc[\xi] , l[\xi])$ is
an elliptic, formally self-adjoint differential operator of second
order. $\Lbbloc$ is therefore Fredholm as a map
$$
H^{2+k}(\Bo) \to H^k(\Bo) \times H^{1/2+k}(\partial \Bo) ,
$$
for $k \geq 0$, $k$ integer.

Further, let $Z$ be the gravitational part of $-L$, given by
the second line in \eqref{lambda5}.
Explicitely, for $X \in \Bo$,
\begin{align*}
Z_i [\xi](X) &= G \xi^j(X) \partial_i \partial_j \mathring U(X) -
- G \rhonull(X) \int_\Bo \left ( \partial_i \partial_j \frac{1}{|X - X'|}
\right )
\rhonull(X') \xi^j(X') dX \\
&=  G \xi^j(X) \partial_i \partial_j \mathring U
+ 4\pi G \rhonull(X) \partial_i \partial_j
\left ( \Delta^{-1} (\rhonull \xi^j \chi_{\Bo}) \right ) (X) .
\end{align*}
By assumption, the reference density $\rhonull$ is smooth on $\overline
\Bo$. By Lemma \ref{lem:VV}, $Z$ is a bounded operator
$$
Z: H^k(\Bo) \to H^k(\Bo) ,
$$
for $k \geq 0$, $k$ integer.
Hence, $\xi \mapsto (Z[\xi],0)$ is a compact linear map
$$
H^{2+k}(\Bo) \to H^k(\Bo) \times H^{1/2+k}(\partial \Bo) ,
$$
and hence it follows that $\Lbb$ is  Fredholm, since a compact perturbation of
a Fredholm operator is again Fredholm,
cf. \cite[Theorem 37.5]{conway:operatortheory:MR1721402}.
In particular, $\Lbb$ has
finite dimensional kernel and closed range with finite dimensional cokernel
given by $\ker \Lbb^*$ where $\Lbb^*$ is the operator mapping
$$
H^{-k}(\Bo) \times H^{-1/2-k}(\partial\Bo) \to H^{-2-k}(\Bo) ,
$$
defined by
$$
\la \Lbb \xi , \Phi \ra = \la \xi , \Lbb^* \Phi \ra ,
$$
where $\la \cdot , \cdot \ra$ is the duality pairing.
Explicitely, with $\Phi = (v,\varphi)$ we have
$$
\la \Lbb \xi , \Phi \ra = \int_\Bo - L_i[\xi] b^i dx + \int_{\partial\Bo}
l_i[\xi] \tau^i dS ,
$$
\newcommand{\range}{\text{range}\,}
and $\range\Lbb$ is given by the space of $F = (b, \tau)$ such that
$$
\int_\Bo b_i v^i dx + \int_{\partial \Bo} \tau_i \varphi^i dS  = 0 ,
$$
for all $\Phi \in \ker \Lbb^*$.
We have the Green's identity
$$
\int_\Bo \eta^i L_i[\xi] d x - \int_\Bo L_i[\eta] \xi^i d x
= \int_{\partial\Bo} \eta^i l_i[\xi] dS(x) - \int_{\partial\Bo} l_i[\eta]
\xi^i dS .
$$
Now consider the case $k=0$.
Since $\Phi \in \ker \Lbb^*$ is equivalent to
$$
\la \xi , \Lbb^* \Phi \ra = \la \Lbb \xi , \Phi \ra = 0 ,
$$
for all $\xi \in H^2(\Bo)$, we have for $\Phi = (v, \varphi) \in \ker
\Lbb^*$,
\begin{align*}
0 &= \int_\Bo -L_i[\xi] v^i dx + \int_{\partial \Bo} l_i[\xi] \varphi^i dS \\
&= \int_\Bo - \xi^i L_i[v] dx + \int_{\partial\Bo} \xi^i l_i[v] + \int_{\partial \Bo} (\varphi^i - v^i)
l_i[\xi] dS .
\end{align*}
Since $\xi \in H^2(\Bo)$ is arbitrary, this gives immediately that $L_i[v]=
0$. Further, we have that $(\xi, l[\xi])$ is a Dirichlet system on $\partial \Bo$, and hence it
follows that $l_i[v] = 0$ and $\varphi^i = v^i
$ on $\partial\Bo$.
It follows from the analysis in \cite{lions:magenes:vol1} that the kernel and
cokernel of $\Lbb$ are independent of $k$, and hence are well-defined. We end
up with the following result.
\begin{prop} \label{prop:fredholm}
Assume that the radial stability condition, cf. definition \ref{def:rad-stab}, holds. 
Let $k \geq 0$, $k$ integer.
The operator  $\xi \mapsto \Lbb[\xi] = (-L[\xi], l[\xi])$
is a Fredholm operator
$$
H^{2+k}(\Bo) \to H^k(\Bo) \times H^{1/2+k}(\partial \Bo) ,
$$
with finite dimensional kernel, and range consisting of $(b_i, \tau_i)$ such
that
$$
\int_{\Bo} -b_i v^i dx + \int_{\partial\Bo} \tau_i v^i dS = 0 ,
$$
for all $v \in \ker L$.
\end{prop}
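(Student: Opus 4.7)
The plan is to split the operator as $\Lbb = \Lbbloc + (Z, 0)$, where $\Lbbloc\xi = (-\Lloc[\xi], l[\xi])$ is the local piece coming from the first line of \eqref{lambda5} together with the boundary operator \eqref{lambda5'}, and $Z$ is the nonlocal gravitational part from the second line of \eqref{lambda5}. First I would check that $\Lbbloc$ is an elliptic boundary value problem of Douglis--Nirenberg type. The elastic contribution $4\,\partial^j[\mr{\rho}\,\mr{l}\,(\delta_\xi\mr{\sigma})_{ij}]$ dominates the principal symbol because $\mr{\rho}\,\mr{l}>0$ provides a positive shear modulus, and a direct inspection of \eqref{lambda5'} shows that $(\xi|_{\partial\Bo}, l[\xi]|_{\partial\Bo})$ forms a Dirichlet system satisfying the Lopatinski--Shapiro covering condition. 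By the Lions--Magenes theory, $\Lbbloc$ is then Fredholm as a map $H^{2+k}(\Bo) \to H^k(\Bo) \times H^{1/2+k}(\partial\Bo)$ for every nonnegative integer $k$.

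For the nonlocal part, the key observation is that
\[
Z_i[\xi] = G\,\xi^j\,\partial_i\partial_j\mr{U} + 4\pi G\,\rhonull(X)\,\partial_i\partial_j\bigl(\Delta^{-1}(\rhonull\,\xi^j\chi_\Bo)\bigr)
\]
is bounded on $H^k(\Bo)$: the first summand is multiplication by a smooth function, and the second is bounded on $H^k$ by Lemma \ref{lem:VV}, since the volume potential gains two derivatives and the cutoff factors $\rhonull\chi_\Bo$ are smooth on $\overline\Bo$. Pre-composing $Z$ with the compact Rellich embedding $H^{2+k}(\Bo)\hookrightarrow H^k(\Bo)$ makes $(Z,0)$ a compact operator into $H^k(\Bo)\times H^{1/2+k}(\partial\Bo)$, so $\Lbb$ is Fredholm as a compact perturbation of a Fredholm operator.

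It remains to identify $\range\Lbb$. Since $\Lbb$ has closed range, it suffices to compute $\ker\Lbb^*$, which I would do by inserting the Green's identity \eqref{eq:Green} into the duality pairing. A pair $\Phi=(v,\varphi)$ annihilates $\range\Lbb$ exactly when
\[
-\int_\Bo \xi^i L_i[v]\,dx + \int_{\partial\Bo} \xi^i l_i[v]\,dS + \int_{\partial\Bo}(\varphi^i-v^i)\,l_i[\xi]\,dS = 0
\]
for every $\xi\in H^{2+k}(\Bo)$. Varying $\xi$ compactly supported in $\Bo$ yields $L[v]=0$; then independently prescribing the Dirichlet trace of $\xi$ and the value of $l_i[\xi]$ on $\partial\Bo$ (available because $(\xi,l[\xi])$ is a Dirichlet system) forces $l[v]=0$ and $\varphi=v|_{\partial\Bo}$. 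Hence $\ker\Lbb^*$ is in bijection with $\ker\Lbb$, and the stated orthogonality characterization of the range follows. The step I expect to be most delicate is the Dirichlet-system property for $(\xi,l[\xi])$: it underlies both the Lions--Magenes application and the decoupling of boundary variations in the adjoint argument, and its verification relies essentially on the assumption $\mr{l}>0$ so that $l_i[\xi]$ carries genuine Neumann information.
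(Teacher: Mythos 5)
Your proposal is correct and follows essentially the same route as the paper: the same splitting $\Lbb=\Lbbloc+(Z,0)$, Fredholmness of the local elliptic boundary value problem via Lions--Magenes, compactness of the gravitational term via Lemma \ref{lem:VV} and Rellich, and identification of the cokernel through the Green's identity and the Dirichlet-system property of $(\xi,l[\xi])$. Your added remarks on why $\mr{l}>0$ is needed for ellipticity and the covering condition simply make explicit what the paper asserts without comment.
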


\section{Implicit function theorem} \label{sec:implicit}
In this section, we assume that the radial stability condition, cf. definition \ref{def:rad-stab}, holds.
Let $\rhonull_\lambda$, $\lambda \in (-\eps,\eps)$
be a 1-parameter family of densities on $\Bo$.
Let $\FF[\rhonull_\lambda,\phi] = (b_i[\rhonull_\lambda,\phi],
\tau_i[\rhonull_\lambda,\phi])$ where
\begin{align*}
b_i[\rhonull,\phi] &= - \partial_A (\rhonull \tau^A{}_i) +
\rhonull (\partial_i U) \circ \phi  , \\
\tau_i [\rhonull,\phi] &= \tau^A{}_i n_A \big{|}_{\partial\Bo} .
\end{align*}
The Frechet derivative of $\FF$ at the reference state $\lambda=0, \phi=\id$
is
$$
D_\phi \FF\big{|}_{\lambda = 0, \phi = \id} \, \xi  = \Lbb \, \xi = (-L[\xi], l[\xi]) .
$$
By theorem \ref{thm:kernel}, and proposition \ref{prop:fredholm} we have that
the kernel and cokernel
of $D_\Phi \FF$ consists of Killing vector fields, i.e. $\zeta^i$ of the form
$c^i + \omega^i{}_j X^j$ for
constant $c^i, \omega^i{}_j$, $\omega_{ij} = \omega_{[ij]}$.

\newcommand{\Pbb}{\mathbb P}
Now let $\Pbb$ be the projection defined by
$\PP: (b_i, \tau^i)  \mapsto ({b'}_i, \tau_i)$, with
$$
{b'}_i = b_i + c_i + \omega_{ij} X^j ,
$$
for suitable $c_i, \omega_{ij}$ such that $({b'}^i, \tau^i)$
 are equilibrated, i.e.
$$
\int_{\Bo} {b'}_i \zeta^i dx - \int_{\partial\Bo} \tau_i \zeta^i dS = 0 ,
$$
for all $\zeta^i \in \ker \Lbb$.
The above determines $b'_i$ in terms of $b_i$ and $\tau_i$ and hence also the projection
operator
$\Pbb$.

\newcommand{\Xbb}{\mathbb X}
\newcommand{\Ybb}{\mathbb Y}
We eliminate the kernel of $\Lbb =
D_\phi \FF \big{|}_{\lambda = 0, \phi=\id}$ by
fixing the 1-jet of $\phi$ at the origin. Denote the space of $\xi^i \in
H^{2+k}(\Bo)$ with $\xi_i(0) = 0$, $\partial_i \xi_j(0) = 0$ by $\Xbb$ and let
$\Ybb$ denote the range of $\Lbb$. Then we have that $\Lbb: \Xbb \to \Ybb$ is
an isomorphism. Now the following result is an immediate consequence of the
implicit function theorem.
\begin{prop} \label{prop:implicit}
Fix $k \geq 1$ and $\eps > 0$.
For $\lambda \in (-\eps, \eps)$,
let  $\lambda \mapsto \rhonull_\lambda$ be a 1-parameter
family of smooth functions on $\Bo$, so that $\rhonull = \rhonull_0$ is a
solution to $\FF[\rhonull, \id] = 0$.
There is an $\eps_0 > 0$ so
that  for $\lambda \in (-\eps_0, \eps_0)$, there is a
unique $\phi_\lambda \in H^{2+k}(\Bo)$ with $\phi(0) = 0$, $\partial_j
\phi^i(0) = \delta^i{}_j$, such that
$$
\Pbb \FF[\rhonull_\lambda, \phi_\lambda] = 0 .
$$
\end{prop}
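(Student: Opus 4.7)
My plan is to apply the Banach space implicit function theorem to a reformulation of the equation in which the kernel of the linearization has been eliminated. Writing $\phi = \id + \eta$ with $\eta \in \Xbb$, I would study the smooth map
\begin{equation*}
G : (-\eps, \eps) \times \Xbb \to \Ybb, \qquad G[\lambda, \eta] = \Pbb\,\FF[\rhonull_\lambda, \id + \eta] ,
\end{equation*}
whose regularity in $\eta$ follows from the smoothness of $\FF : H^{2+k}(\Bo) \to H^k(\Bo) \times H^{1/2+k}(\partial\Bo)$ recorded in Section \ref{sec:refbodydeform}, together with the continuity of the finite-rank projection $\Pbb$ and the smooth dependence of $\rhonull_\lambda$ on $\lambda$. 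The hypothesis that $\rhonull_0 = \rhonull$ solves $\FF[\rhonull, \id] = 0$ gives $G[0, 0] = 0$.

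The partial derivative in $\eta$ at the basepoint is, by the chain rule, $D_\eta G[0, 0] = \Pbb \circ \Lbb|_{\Xbb}$. Since $\Ybb$ is defined as the range of $\Lbb$ and $\Pbb$ projects $H^k(\Bo) \times H^{1/2+k}(\partial\Bo)$ onto $\Ybb$ along a complement of the cokernel, $\Pbb$ restricts to the identity on $\Ybb$; hence $D_\eta G[0, 0] = \Lbb|_{\Xbb} : \Xbb \to \Ybb$.

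The decisive step is then to identify $\Lbb|_{\Xbb} : \Xbb \to \Ybb$ as a topological isomorphism. By Theorem \ref{thm:kernel} the kernel of $\Lbb$ coincides with the six-dimensional space of infinitesimal Euclidean motions $c^i + \omega^i{}_j X^j$ with $\omega_{ij}$ skew-symmetric, and the normalization of the $1$-jet at the origin built into $\Xbb$ fixes a complement of this kernel inside $H^{2+k}(\Bo)$. Injectivity of $\Lbb|_{\Xbb}$ is then immediate: any element of $\Xbb \cap \ker \Lbb$ is a Killing vector whose value and relevant derivative components at the origin vanish, so is zero. For surjectivity, given $y \in \Ybb$ I would pick an arbitrary preimage in $H^{2+k}(\Bo)$ under $\Lbb$ and correct it by the unique infinitesimal Euclidean motion that cancels its normalization data at the origin. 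The open mapping theorem then promotes the resulting continuous linear bijection to a bounded inverse.

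With $D_\eta G[0, 0]$ an isomorphism, the Banach space implicit function theorem produces $\eps_0 > 0$ and a unique smooth curve $\lambda \mapsto \eta_\lambda \in \Xbb$ on $(-\eps_0, \eps_0)$ with $\eta_0 = 0$ and $G[\lambda, \eta_\lambda] = 0$; setting $\phi_\lambda = \id + \eta_\lambda$ yields the claimed solution, with the required $1$-jet normalization holding by construction. The main substantive content is not the implicit function step itself but the prior identification---assembled from Theorem \ref{thm:kernel} and Proposition \ref{prop:fredholm}---of both $\ker \Lbb$ and $\ker \Lbb^*$ with the space of infinitesimal Euclidean motions, since this is exactly what makes the origin normalization a complement of the kernel and what ensures that $\Pbb$ is a well-defined projection onto $\Ybb$ of finite corank.
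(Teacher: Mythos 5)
Your proposal is correct and follows essentially the same route as the paper, which compresses the entire argument into the phrase ``immediate consequence of the implicit function theorem'' after setting up $\mathbb P$, $\mathbb X$ and $\mathbb Y$; you supply exactly the missing details, namely that $\mathbb P$ acts as the identity on $\mathbb Y$, that $D_\eta G[0,0]=\mathbb L|_{\mathbb X}$, and that this restriction is a topological isomorphism by Theorem \ref{thm:kernel} and Proposition \ref{prop:fredholm}. The one point worth making explicit is that your surjectivity step (correcting an arbitrary preimage by a Killing field so that it lands in $\mathbb X$) only works if the normalization defining $\mathbb X$ is understood as fixing $\xi^i(0)$ and the antisymmetric part $\partial_{[i}\xi_{j]}(0)$ --- six conditions matching the six-dimensional kernel --- rather than the full gradient at the origin, which would make $\mathbb X$ of codimension twelve and $\mathbb L|_{\mathbb X}$ fail to be onto $\mathbb Y$.
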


\subsection{Equilibration}
Recall that the the field equation in the Eulerian picture takes the form
$$
\nabla_i (\tau^i{}_j \chi_{f^{-1}(\Bo)} + \Theta^i{}_j) = 0 .
$$
For $\zeta^i$ a Killing field we have $\nabla_{(i} \zeta_{j)} = 0$ and hence
$$
\int_{\Re^3} \zeta_j \nabla_i (\tau^{ij} \chi_{f^{-1}(\Bo)} +
\Theta^{ij}) dx  = - \int_{\Re^3} \nabla_i \zeta_j (\tau^{ij}
\chi_{f^{-1}(\Bo)} + \Theta^{ij}) dx = 0 ,
$$
since $\tau^{ij} = \tau^{(ij)}$ and $\Theta^{ij} = \Theta^{(ij)}$.
Recall
$$
\nabla_i \Theta^i{}_j = \rho \chi_{f^{-1}(\Bo)} \partial_j U .
$$
Applying the change of variables formula and the Piola identity, and taking
into account the boundary condition
$\tau^i{}_j n_i \big{|}_{\partial f^{-1}(\Bo)} = 0$ we have
$$
0 = \int_{\Bo} \zeta^j \circ \phi  [- \partial_A \taudens^A{}_j + (\partial_j
U)\circ \phi ] dX ,
$$
and hence $(b_i, 0)$ is equilibrated with respect to $\zeta^i \circ \phi$.

We are now ready to prove
\begin{thm} \label{thm:mainimplicit}
Assume that the radial stability condition holds.
Fix $k \geq 1$ and $\eps > 0$.
For $\lambda \in (-\eps, \eps)$,
let  $\lambda \mapsto \rhonull_\lambda$ be a 1-parameter
family of smooth functions on $\Bo$, so that $\rhonull = \rhonull_0$ is a
solution to $\FF[\rhonull, \id] = 0$, and $\rhonull_\lambda \big{|}_{\partial
  \Bo} = \rho_0$ for $\lambda \in (-\eps, \eps)$. There is an $\eps_0 > 0$ so
  that  for $\lambda \in (-\eps_0, \eps_0)$, there is a
unique $\phi_\lambda \in H^{2+k}(\Bo)$ with $\phi(0) = 0$, $\partial_j
\phi^i(0) = \delta^i{}_j$, such that
$$
\FF[\rhonull_\lambda, \phi_\lambda] = 0 .
$$
\end{thm}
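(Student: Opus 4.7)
The plan is to invoke Proposition \ref{prop:implicit} to obtain, for $|\lambda|$ small, a deformation $\phi_\lambda \in H^{2+k}(\Bo)$ with the prescribed 1-jet at the origin satisfying the projected equation $\Pbb \FF[\rhonull_\lambda, \phi_\lambda] = 0$, and then to argue that the six-dimensional ``correction'' that the projection $\Pbb$ has introduced must vanish automatically as a consequence of the Euclidean invariance of the field equations. Unpacking the definition of $\Pbb$ from section \ref{sec:implicit}, the condition $\Pbb \FF = 0$ is equivalent to the existence of $c_i(\lambda) \in \Re^3$ and $\omega_{ij}(\lambda) = -\omega_{ji}(\lambda)$, depending smoothly on $\lambda$ and vanishing at $\lambda = 0$, such that
\begin{equation*}
\tau_i[\rhonull_\lambda, \phi_\lambda] = 0 \text{ on } \partial\Bo, \qquad b_i[\rhonull_\lambda, \phi_\lambda] = -\bigl(c_i(\lambda) + \omega_{ij}(\lambda) X^j\bigr) \text{ in } \Bo,
\end{equation*}
so the remaining task is to show $c_i(\lambda) = \omega_{ij}(\lambda) = 0$ for $\lambda$ in some possibly smaller neighborhood of $0$.

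Here the ``actio est reactio'' identity recorded in the Equilibration subsection enters decisively. Because the traction-free boundary condition is satisfied exactly by $\phi_\lambda$, that identity gives
\begin{equation*}
\int_\Bo b_i[\rhonull_\lambda, \phi_\lambda]\, (\eta^i \circ \phi_\lambda)(X)\, dX = 0 \qquad \text{for every Euclidean Killing vector } \eta \text{ on } \Re^3_\Sp .
\end{equation*}
Substituting the expression above for $b_i$ and letting $\eta$ range over a basis of the six-dimensional Lie algebra of translations and rotations of $\Re^3_\Sp$ produces a linear system of six equations for the six unknowns $(c_i(\lambda), \omega_{ij}(\lambda))$,
\begin{equation*}
\int_\Bo \bigl(c_i(\lambda) + \omega_{ij}(\lambda) X^j\bigr)(\eta^i \circ \phi_\lambda)(X)\, dX = 0 ,
\end{equation*}
with coefficient matrix $M(\lambda)$ depending continuously on $\lambda$ through $\phi_\lambda$.

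It then suffices to verify that $M(0)$ is invertible. At $\lambda = 0$, $\phi_0 = \id$ and $\eta^i \circ \phi_0 = \eta^i$, so $M(0)$ is the Gram matrix of the six-dimensional space of Killing fields on the ball $\Bo = B(R_0)$ in the standard $L^2$ inner product. The rotational symmetry of $\Bo$ yields $\int_\Bo X^j \, dX = 0$, which decouples the translation and rotation blocks, and $\int_\Bo X^j X^k \, dX = \kappa\, \delta^{jk}$ with $\kappa > 0$, which makes the rotation block a positive multiple of the standard inner product on antisymmetric $3\times 3$ matrices, while the translation block is $|\Bo|\, \delta_{ij}$; hence $M(0)$ is block-diagonal and positive-definite. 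Since $H^{2+k}(\Bo) \hookrightarrow C^0(\overline{\Bo})$ for $k \geq 1$ and $\lambda \mapsto \phi_\lambda$ is smooth into $H^{2+k}(\Bo)$ by Proposition \ref{prop:implicit}, $M(\lambda)$ depends continuously on $\lambda$, so invertibility persists on some interval $(-\eps_0, \eps_0)$. This forces $c_i(\lambda) = \omega_{ij}(\lambda) = 0$ there, hence $\FF[\rhonull_\lambda, \phi_\lambda] = 0$; uniqueness is inherited from Proposition \ref{prop:implicit}. I expect the main delicacy to be the book-keeping that produces $\eta \circ \phi_\lambda$ rather than $\eta$ in the equilibration identity, reflecting the change of variables between body and physical frames — but this is exactly what the Equilibration subsection sets up, so the argument reduces to assembling these ingredients.
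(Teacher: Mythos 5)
Your proposal is correct and follows essentially the same route as the paper: obtain $\phi_\lambda$ from Proposition \ref{prop:implicit}, observe that $\Pbb\FF=0$ forces the load $b_i$ to be a Killing field, and use the ``actio est reactio'' equilibration identity $\int_\Bo (\zeta^i\circ\phi_\lambda)\,b_i\,dX=0$ together with the nondegeneracy (for $\phi_\lambda$ near $\id$) of this pairing on the six-dimensional space of Killing fields to conclude $b_i=0$. Your explicit verification that the Gram matrix $M(0)$ is block-diagonal and positive definite simply fills in the step the paper compresses into ``for $\phi_\lambda$ sufficiently close to $\id$ this implies $b_i=0$.''
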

\begin{proof} Let $\phi_\lambda$ be the solution to $\Pbb
  \FF[\rhonull_\lambda, \phi_\lambda] = 0$ constructed in
proposition \ref{prop:implicit}, and let $K$ denote the space of Killing fields on $\Re^3$.
By the proof of
proposition \ref{prop:implicit} we have that the load
$b^i = b^i[\rhonull_\lambda, \phi_\lambda]$ corresponding to
$\phi_\lambda$ satisfies $b^i \in K$. On the other hand, we have by the above
discussion that
$$
\int_\Bo \zeta^i \circ \phi_\lambda b_i = 0 , \quad \forall \zeta^i \in K.
$$
For $\phi_\lambda$ sufficiently close to $\id$ this implies $b_i = 0$.
Since $\phi_\lambda$ depends continuously on $\lambda$, the result follows.
\end{proof}

\section{Non-spherical nature of solutions} \label{sec:physint}
It is important to understand that the method we have developed in this work is capable  of ``building mountains'', i.e. that the solutions we construct are indeed non-spherical. Before proving that
this is the case, it will be useful to make a few observations on the pure fluid case, where the action in the material frame \eqref{eq:action-material} takes the (static Euler-Poisson) form
\begin{multline}\label{euler-poisson}
\mathcal{L}_{ep}[\mathring{\rho};\phi]= \int_{\mathcal{B}}\mathring{\rho}(X)
\,e(\mathring{\rho} \, \det( \phi^j{}_B)^{-1})(X)
dX \\
- \frac{G}{2}\int_{\mathcal{B}\times\mathcal{B}}\!\!\!\!\mathring{\rho}(X)
\mathring{\rho}(X')\,\frac{dX\, dX'}{|\phi(X) - \phi(X')|} .
\end{multline}
Let  $\psi$ be a diffeomorphism $\psi:\Bo \rightarrow \Bo$
(in particular $\psi$ maps $\partial \mathcal{B}$ into itself).
Further, let $\mr{\rho}_\psi = \mathring{\rho}\circ \psi\,|\frac{\partial \psi}{\partial X}|,$
where $|\frac{\partial \psi}{\partial X}| = \det(\psi^i{}_A)$,
and
$\phi_\psi = \phi \circ \psi$.
Then the action given by \eqref{euler-poisson}
satisfies the covariance
property
\begin{equation}\label{inv}
\LL_{ep}[\mathring{\rho}\,;\phi]=\LL_{ep}[\mathring{\rho}_\psi;\phi_\psi] .
\end{equation}
This covariance property is of course reflected by the fact that the Eulerian variable
$\rho[\mr{\rho},\phi](x) = (\mr{\rho}\,|\frac{\partial \phi}{\partial X}|^{-1})(\phi^{-1}(x))$ remains unchanged under the action of $\psi$ on $(\mr{\rho},\phi)$.
It follows that the Euler-Poisson system is symmetric\footnote{Not that \emph{covariance} refers to a change of both the field variable $\phi$ and the background field,
whereas \emph{symmetry} refers to change of field variable leaving the background invariant.
For example,
special relativistic Maxwell theory (the background field in this case being the Minkowski metric)  is covariant under spacetime diffeomorphisms but symmetric under Poincare transformations.} under the action of the infinite dimensional group of volume preserving diffeomorphisms $\psi$
of $\mathcal{B}$, i.e. when $\psi$ leaves the volume form $\mr{\rho}\,dX$ invariant, then
$$
\mathcal{L}_{ep}[\mathring{\rho};\phi \circ \psi]= \mathcal{L}_{ep}[\mathring{\rho};\phi] .
$$
Next recall from a theorem of Moser \cite{moser} that for any positive densities $\mr{\rho}_1$, $\mr{\rho}_2$ in $\bar{\mathcal{B}}$ such that
$\int_\mathcal{B} \mr{\rho}_1 dX = \int_\mathcal{B} \mr{\rho}_2 dX$ there exists a diffeomorphism $\psi:\Bo \rightarrow \Bo$, such that $\mr{\rho}_2 = \mathring{\rho}_1\circ \psi\,|\frac{\partial \psi}{\partial X}|$. Thus, for fluids, any change of density $\mr{\rho}$ leaving its integral over $\mathcal{B}$ unchanged, should be viewed as a mere change of gauge. Note that $\int_\mathcal{B}\mr{\rho}\, dX$ is nothing but the total mass of the physical solution,
namely $\int_{\phi(\mathcal{B})} \rho[\mr{\rho},\phi](x) \,dx$.
\begin{remark}
The above transformation properties
explain the fact that the linearized operator \eqref{eq:lambda-eqs}, in the absence of the elastic terms, has a kernel containing translations plus the infinite dimensional set of vector fields
$\xi$ satisfying $\delta_\xi \mr{\rho} =0$ and $\xi^i n_i|_{\mathcal{B}} = 0$. When the
radial stability condition, cf. definition \ref{def:rad-stab}, holds,
the proofs of lemma \ref{lem:radial},
and of proposition \ref{prop:nonradial} in the case where $\mr{l}$ is zero, show that the kernel in fact consists of these vector fields. Finally, we remark that the transformation properties can be used to derive the identity
stated in proposition \ref{prop:alternative}. The proof of the last statement is left to the reader.
\end{remark}
Let $\delta \mr{\rho} = \frac{d}{d \lambda}\mr{\rho}_\lambda|_{\lambda = 0}$. Before stating our proposition on lack of spherical symmetry we prove two lemmas. The first concerns the form of the perturbed Eulerian density and Cauchy stress.
\begin{lemma} There holds
\begin{equation}\label{1steuler}
\frac{d}{d \lambda} \rho[\mr{\rho}_\lambda,\phi_\lambda] \bigg{|}_{\lambda = 0}
= \delta \mr{\rho} - \delta_\xi \mr{\rho} ,
\end{equation}
and
\begin{equation}\label{2ndeuler}
\frac{d}{d \lambda} \tau_{ij}[\mr{\rho}_\lambda,\phi_\lambda] \bigg{|}_{\lambda = 0}
= \frac{d \mr{p}}{d\mr{\rho}}(\delta \mr{\rho} - \delta_\xi \mr{\rho})\delta_{ij} -
4 \mr{\rho}\,\mr{l}\, (\delta_\xi \mr{\sigma})_{ij} .
\end{equation}
\end{lemma}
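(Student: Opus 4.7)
The proof is a direct chain-rule computation organized around the three sources of $\lambda$-dependence at $\lambda=0$: the variation $\delta\mr{\rho}$ of the reference density, the velocity $\xi = \frac{d}{d\lambda}\phi_\lambda|_{\lambda=0}$ of the deformation, and the pullback $\phi_\lambda^{-1}$ implicitly present whenever an Eulerian field is evaluated at a fixed point $x$. Once the three pieces are written out, both identities essentially fall out of the chain rule.

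For \eqref{1steuler}, start from the closed-form expression
\[
\rho[\mr{\rho},\phi](x) = \bigl(\mr{\rho}\,\det(\phi^i{}_A)^{-1}\bigr)\bigl(\phi^{-1}(x)\bigr),
\]
which is just \eqref{eq:important}. Using Jacobi's formula $\frac{d}{d\lambda}\det(\phi_\lambda^i{}_A)^{-1}\big{|}_0 = -\xi^j{}_{,j}$, and the identity $\frac{d}{d\lambda}\phi_\lambda^{-1}\big{|}_0 = -\xi$ obtained by differentiating $\phi_\lambda\circ\phi_\lambda^{-1} = \id$, the three contributions sum to $\delta\mr{\rho} - \mr{\rho}\,\xi^j{}_{,j} - \xi^j\mr{\rho}_{,j} = \delta\mr{\rho} - (\mr{\rho}\,\xi^j)_{,j} = \delta\mr{\rho} - \delta_\xi \mr{\rho}$.

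For \eqref{2ndeuler}, invert the Piola transform \eqref{piolacauchy}: in our Cartesian setting $\sqrt{\bodymetric}\,J = \det(\phi^i{}_A)$, giving
\[
\tau^k{}_i(x) = -\det(\phi_\lambda^j{}_B)^{-1}(X)\,\phi_\lambda^k{}_A(X)\,\taudens^A{}_i[\mr{\rho}_\lambda,\phi_\lambda](X),\quad X = \phi_\lambda^{-1}(x).
\]
The full variation of $\taudens$ is the expression \eqref{piola} augmented by the direct $\mr{\rho}$-term $-\frac{d\mr{p}}{d\mr{\rho}}\delta\mr{\rho}\,\delta^A{}_i$ obtained from \eqref{unperturbed}. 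Applying the chain rule produces a clutter of terms which collapse via two observations: the antisymmetric piece $2\mr{p}\,\delta^{[A}{}_j\delta^{B]}{}_i\xi^j{}_{,B}$ in \eqref{piola} cancels exactly against the $\mr{p}$-contributions $-\mr{p}\,\xi^j{}_{,j}\delta^k{}_i + \mr{p}\,\xi^k{}_{,i}$ coming from the Jacobian and the factor $\phi^k{}_A$; and the three remaining fluid pieces $\mr{\rho}\tfrac{d\mr{p}}{d\mr{\rho}}\xi^j{}_{,j}\delta^k{}_i$, $-\tfrac{d\mr{p}}{d\mr{\rho}}\delta\mr{\rho}\,\delta^k{}_i$, and the convective term $-\xi^j\mr{p}_{,j}\delta^k{}_i = -\tfrac{d\mr{p}}{d\mr{\rho}}\xi^j\mr{\rho}_{,j}\delta^k{}_i$ assemble, by the same manipulation used in part one, into $\tfrac{d\mr{p}}{d\mr{\rho}}(\delta\mr{\rho} - \delta_\xi \mr{\rho})\delta_{ij}$. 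Finally a direct algebraic reduction of the $\mr{l}$-bracket in \eqref{piola} using $(\delta_\xi \mr{\sigma})_{ij} = \xi_{(i,j)} - \tfrac{1}{3}\delta_{ij}\xi^l{}_{,l}$ produces the elastic term $-4\mr{\rho}\,\mr{l}\,(\delta_\xi \mr{\sigma})_{ij}$.

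The only genuine obstacle is bookkeeping: the $\mr{p}$- and $\tfrac{d\mr{p}}{d\mr{\rho}}$-terms appear in three distinct places (the Piola formula, the Jacobian, and the pushforward factor) and collapse only because of the structural identity $\mr{p}_{,j} = \tfrac{d\mr{p}}{d\mr{\rho}}\mr{\rho}_{,j}$ together with the antisymmetric structure of \eqref{piola}. Structurally the result is to be expected: variations satisfying $\delta\mr{\rho} = \delta_\xi \mr{\rho}$ are pure gauge for a fluid, by the Moser-type discussion preceding the lemma, so the Eulerian density and the fluid part of the Cauchy stress must depend on $(\delta\mr{\rho},\xi)$ only through the combination $\delta\mr{\rho} - \delta_\xi \mr{\rho}$, with the elastic contribution being the only genuinely new piece.
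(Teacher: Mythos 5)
Your proposal is correct and follows essentially the same route as the paper, which simply cites \eqref{eq:important} for \eqref{1steuler} and \eqref{piola} together with \eqref{piolacauchy} for \eqref{2ndeuler}; your write-up just fills in the chain-rule bookkeeping. One cosmetic slip: the three ``remaining fluid pieces'' as you list them do not sum to $\tfrac{d\mr{p}}{d\mr{\rho}}(\delta\mr{\rho}-\delta_\xi\mr{\rho})\delta_{ij}$ with the signs shown — the first two are terms of $\delta\taudens$ and must be negated by the overall minus in your Piola inversion before being combined with the convective term — but the cancellation structure and the final result are right.
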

\begin{proof} The proof of \eqref{1steuler} follows easily from \eqref{eq:important}. The proof of \eqref{2ndeuler} follows from \eqref{piola}
together with \eqref{piolacauchy}.
\end{proof}
The second lemma concerns the linearized problem.
\begin{lemma}\label{lem:mountains}
The linearized problem, namely
\begin{equation}\label{lineq}
D_\phi \mathcal{F}[\mathring{\rho},{\bf{id}}] . \xi = -  D_{\mr{\rho}} \,\mathcal{F}[\mathring{\rho},{\bf{id}}]  . \delta \mr{\rho}
\end{equation}
takes the explicit form 
\begin{subequations}\label{eq:final-syst}
\begin{multline}\label{finally}
\mr{\rho}\,\partial_i\left(\frac{1}{\mr{\rho}}\frac{d \mr{p}}{d\mr{\rho}}\, \delta_\xi \mr{\rho} + \delta_\xi \mr{U}\right) +
4 \,\partial^j \left[\mathring{\rho}\, \mathring{l}\,(\delta_\xi \mr{\sigma})_{ij}\right]
\\
=
\mr{\rho}\,\partial_i\left(\frac{1}{\mr{\rho}}\frac{d \mr{p}}{d\mr{\rho}}\, \delta \mr{\rho}\right)
- G \mr{\rho}\,  \partial_i \int_{\mathcal{B}} \frac{(\delta \mr{\rho)}(X')}{|X - X'|} dX' ,
\end{multline}
for the bulk and
\begin{equation}\label{given}
\left[\left(\frac{d \mathring{p}}{d \mathring{\rho}}\,\delta_\xi \mr{\rho} + \mr{\rho}\, \mr{U}' \xi^j n_j\right)n_i
+
4 \mr{\rho}\,\mr{l}\, (\delta_\xi \mr{\sigma})_{ij} n^j
\right]\,\bigg{|}_{\partial \mathcal{B}} = \left(\frac{d \mathring{p}}{d \mathring{\rho}}\,\delta \mr{\rho}\right) n_i \bigg{|}_{\partial \mathcal{B}} ,
\end{equation}
\end{subequations}
for the boundary part.
\end{lemma}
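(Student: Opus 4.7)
The plan is to match the two sides of \eqref{lineq} component by component. Its left-hand side, $D_\phi \FF[\mr{\rho}, \id]\cdot \xi = (-L_i[\xi], l_i[\xi])$, is already in the form appearing on the LHS of \eqref{finally} and \eqref{given} by Proposition \ref{prop:alternative}. Hence the substance of the proof is to evaluate the right-hand side $-D_{\mr{\rho}}\FF[\mr{\rho},\id] \cdot \delta\mr{\rho}$, which I will do by differentiating at $\lambda=0$ along the one-parameter family $(\mr{\rho}_\lambda, \id)$ with $\mr{\rho}_0 = \mr{\rho}$, holding $\phi = \id$ fixed.

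The essential simplification is that for every $\lambda$ the deformation $\phi = \id$ is a conformal map with respect to the body metric $b_\lambda = \mr{\rho}_\lambda^{2/3}\delta$, since $(\id^*\delta)_{AB} = \mr{\rho}_\lambda^{-2/3}(b_\lambda)_{AB}$. Therefore $\tau$ and $\delta$ vanish identically along the family, and by the conformal identities \eqref{eq:dtau-conf} and \eqref{eq:ddeltadH} their first $\phi$-derivatives also vanish at $\phi = \id$. Inserting the stored energy expansion \eqref{eq:epsform} into $\taudens^A{}_i = \partial(\sqrt{b}\,\eps)/\partial \phi^i{}_A$, the terms involving $l$ and $m$ contribute nothing, so only the fluid part survives and I recover $\taudens^A{}_i[\mr{\rho}_\lambda, \id] = -\mr{p}(\mr{\rho}_\lambda)\,\delta^A{}_i$ with $\mr{p}(\mr{\rho}_\lambda) = \mr{\rho}_\lambda^2 e'(\mr{\rho}_\lambda)$, extending \eqref{unperturbed} to the whole family.

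What remains is a routine first variation. The bulk of $\FF[\mr{\rho}_\lambda,\id]$ reduces to $\partial_i\mr{p}(\mr{\rho}_\lambda) + \mr{\rho}_\lambda\,\partial_i U_\lambda$, with $U_\lambda(X) = -G\int_\Bo \mr{\rho}_\lambda(X')/|X-X'|\,dX'$; differentiating at $\lambda=0$ gives
$$\partial_i\!\left(\frac{d\mr{p}}{d\mr{\rho}}\,\delta\mr{\rho}\right) + (\delta\mr{\rho})\,\partial_i \mr{U} + \mr{\rho}\,\partial_i \delta U,$$
where $\delta U(X) = -G\int_\Bo \delta\mr{\rho}(X')/|X-X'|\,dX'$. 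Using that the reference body is a hydrostatic equilibrium, so that $\partial_i \mr{U} = -\mr{\rho}^{-1}(d\mr{p}/d\mr{\rho})\,\partial_i\mr{\rho}$, the first two terms collapse to $\mr{\rho}\,\partial_i\bigl(\mr{\rho}^{-1}(d\mr{p}/d\mr{\rho})\,\delta\mr{\rho}\bigr)$; negating the result then yields exactly the right-hand side of \eqref{finally}. On the boundary, $\taudens^A{}_i n_A|_{\partial\Bo} = -\mr{p}(\mr{\rho}_\lambda)\,n_i|_{\partial\Bo}$ linearizes to $-(d\mr{p}/d\mr{\rho})\,\delta\mr{\rho}\,n_i$, and its negation is the right-hand side of \eqref{given}.

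The only point requiring genuine care is the reduction to fluid-type terms in the second paragraph, i.e.\ verifying that the elastic coefficients $l, m$ cannot intrude when varying only $\mr{\rho}$ with $\phi = \id$ frozen; this is a direct consequence of the conformal vanishing identities collected in Section \ref{sec:identities}. Once that is settled, the remainder of the argument uses nothing beyond Proposition \ref{prop:alternative} and the background hydrostatic equation.
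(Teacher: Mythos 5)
Your proposal is correct and follows essentially the same route as the paper: the left-hand side comes from Proposition \ref{prop:alternative}, the elastic terms drop out of $D_{\mr{\rho}}\FF$ at $\phi=\id$ because $\tau,\delta$ and their first $\phi$-derivatives vanish there while $\tau,\delta$ are independent of $\mr{\rho}$, and the remaining fluid variation is rearranged using hydrostatic equilibrium --- a computation the paper leaves implicit and you carry out. One bookkeeping remark: since the bulk component of $D_\phi\FF\cdot\xi$ is $-L_i[\xi]$ while \eqref{finally} is written with $+L_i[\xi]$ on the left, the displayed right-hand side of \eqref{finally} is $+D_{\mr{\rho}}\FF_{\mathrm{bulk}}\cdot\delta\mr{\rho}$ \emph{without} the negation you apply (the negation is needed only for the boundary term \eqref{given}, where $l_i[\xi]$ enters $D_\phi\FF\cdot\xi$ with a plus sign); the resulting equation is of course the same.
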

\begin{proof}
The left hand side of \eqref{lineq} is clearly given by \eqref{id7}. To deal with the right hand side, we first note that since the invariant $\tau$ is independent of $\mathring{\rho}$, the elastic contribution to the right hand side of \eqref{finally}  is zero. Now the form of the right hand side of equation \eqref{finally} follows by explicit computation
from the fluid Euler-Lagrange equation, namely
\begin{multline}\label{fluideulag}
\mathcal{E}[\mr{\rho},\phi]_i =
- \partial_A \frac{\partial \,[\mr{\rho}\, e(\mr{\rho}\,\det(\phi^j{}_B)^{-1})]}{\partial \phi^i{}_A}
\\
- G \rhonull(X)\!\! \int_{\Bo}\!\! \left (\!\!\partial_i \frac{1}{|z|} \right ) \!\bigg{|}_{z
  = \phi(X) - \phi(X')}\!\!\! \rhonull(X') d X' ,
\end{multline}
in $\mathcal{B}$.
Finally,
the form of the right hand side in the boundary condition \eqref{given} follows easily from
\begin{equation}
D_{\mr{\rho}} \,\taudens^A{}_i[\mr{\rho},{\bf{id}}]= - \frac{d \mathring{p}}{d \mathring{\rho}}\, \delta^A{}_i ,
\end{equation}
which in turn follows from \eqref{unperturbed}.
\end{proof}
We now state our result on lack of spherical symmetry.
\begin{prop}\label{prop:nonsph}
Suppose
$\mr{l} > 0$.
and that the
stability condition \eqref{eq:gamma-stab}
is satisfied.
Let a non-zero function $\delta \mr{\rho}$ on $\Bo$ be given, which has only $l \geq 2$
nonzero modes\footnote{The index $l$ should not be confused with $l(\rho,\tau,\delta)$ or $\mr{l}$.}  in its spherical harmonics expansion, and which has  $\delta \mr{\rho}|_{\partial \mathcal{B}} = 0$.

Then
the perturbed
stress of the physical body,
given by \eqref{2ndeuler},  is not spherically symmetric. Here the perturbed stress is
 calculated with respect to the unique vector field $\xi^i$ solving equation \eqref{lineq}, constructed as in section \ref{sec:implicit}.
\end{prop}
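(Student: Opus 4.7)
I will argue by contradiction. Suppose the perturbed Cauchy stress given by \eqref{2ndeuler},
\[
\delta\tau_{ij}=\tfrac{d\mr{p}}{d\mr{\rho}}(\delta\mr{\rho}-\delta_\xi\mr{\rho})\,\delta_{ij}-4\,\mr{\rho}\,\mr{l}\,(\delta_\xi\mr{\sigma})_{ij},
\]
is spherically symmetric as a tensor field on $\mathbb R^3_\Sp$. Splitting into pure-trace and trace-free parts, and using that $\tfrac{d\mr{p}}{d\mr{\rho}}$ and $\mr{\rho}\,\mr{l}$ are both positive and radial, spherical symmetry of $\delta\tau_{ij}$ is equivalent to the pair of conditions (i)~$\delta\mr{\rho}-\delta_\xi\mr{\rho}$ is a radial function on $\mathcal{B}$; and (ii)~the trace-free tensor $(\delta_\xi\mr{\sigma})_{ij}$ is itself spherically symmetric, and hence of the form $H(R)(n_i n_j-\tfrac{1}{3}\delta_{ij})$ for some radial $H$.

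Next, decompose $\xi=\sum_{l\geq 0}\xi^{(l)}$ into vector spherical harmonic modes. Since $\mr{\rho}$ is radial, the two operators $\xi\mapsto\delta_\xi\mr{\rho}=\partial_i(\mr{\rho}\xi^i)$ and $\xi\mapsto\delta_\xi\mr{\sigma}$ (the latter being the Euclidean conformal Killing operator) commute with the $SO(3)$-action and therefore preserve this decomposition. The space of spherically symmetric trace-free symmetric $2$-tensor fields coincides with the $l=0$ tensor-harmonic class spanned by $n_i n_j-\tfrac{1}{3}\delta_{ij}$; equivalently, $(\delta_\xi\mr{\sigma})_{ij}$ is spherically symmetric precisely when all of its $l\geq 1$ tensor-harmonic components vanish. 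From (ii) I thus obtain $\delta_{\xi^{(l)}}\mr{\sigma}=0$ for every $l\geq 1$, so each such $\xi^{(l)}$ is a Euclidean conformal Killing vector field on $\mathcal{B}$.

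By Liouville's theorem, any conformal Killing vector on the connected open set $\mathcal{B}\subset\mathbb R^3$ extends to a conformal Killing of $\mathbb R^3$, and thus lies in the ten-dimensional Lie algebra generated by translations, rotations, the dilation $X^i\partial_i$, and the three conformal boosts $2(c,X)X^i-|X|^2c^i$. A direct calculation in spherical coordinates shows that the dilation is pure $l=0$, while translations, rotations, and conformal boosts are all pure $l=1$ (for example, their radial components $R\cdot 1$, $c\cdot n$, $0$, and $R^2(c\cdot n)$ are at $l=0$ or $l=1$, and the tangential parts are likewise $l=1$). Hence no nonzero conformal Killing field has any $l\geq 2$ spherical-harmonic content, forcing $\xi^{(l)}=0$ for every $l\geq 2$.

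The final contradiction comes from projecting (i) onto $l\geq 2$. Since $\delta\mr{\rho}$ has by hypothesis only $l\geq 2$ modes, (i) yields $(\delta_\xi\mr{\rho})^{(l)}=(\delta\mr{\rho})^{(l)}$ for every $l\geq 2$; but the left side equals $\delta_{\xi^{(l)}}\mr{\rho}=0$ by the previous step, while the right side is nonzero for at least one $l\geq 2$ by hypothesis. The main point of care in the write-up is the spherical-harmonic bookkeeping: one must check cleanly that the conformal Killing operator and the density-divergence operator intertwine with the projections onto $l$-modes, and that the class of spherically symmetric trace-free symmetric tensors is precisely the $l=0$ tensor-harmonic class. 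Both are standard consequences of rotational covariance, but deserve an explicit statement in the actual proof.
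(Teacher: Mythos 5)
Your argument is correct, but it takes a genuinely different and in places more economical route than the paper's. The paper first proves that the inhomogeneity $H_i$ on the right-hand side of \eqref{finally} is non-zero --- this step uses Moser's theorem to produce a field $\eta$ with $\delta_\eta\mr{\rho}=\delta\mr{\rho}$ and $\eta^i n_i|_{\partial\Bo}=0$, the hypothesis $\delta\mr{\rho}|_{\partial\Bo}=0$, and the pure-fluid ($\mr{l}=0$) case of the Antonov--Lebovitz argument of proposition \ref{prop:nonradial} --- then normalizes the solution so that $\xi$ is non-zero with only $l\geq 2$ modes, and finally derives the contradiction from the \emph{trace-free} part of the spherical-symmetry assumption alone. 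You instead exploit \emph{both} parts of that assumption: the trace-free part, combined with $SO(3)$-equivariance of the conformal Killing operator and the fact that conformal Killing fields carry only $l\leq 1$ content, forces $\xi^{(l)}=0$ for $l\geq 2$; the trace part then forces $(\delta\mr{\rho})^{(l)}=0$ for $l\geq 2$, contradicting the hypothesis. This bypasses the analysis of the source term entirely and in fact establishes the stronger statement that \eqref{2ndeuler} fails to be spherically symmetric for \emph{any} vector field $\xi$, not just the solution of \eqref{lineq}; accordingly you never invoke $\delta\mr{\rho}|_{\partial\Bo}=0$ or the stability condition except through the existence of $\xi$, which the proposition presupposes. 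What the paper's longer route buys is the additional information that the linearized source, and hence the deformation itself, is non-trivial and purely $l\geq 2$, which matters for the interpretation of the solutions as ``mountains''. The points needing care in your write-up are exactly the ones you flag: the equivariance of $\xi\mapsto\delta_\xi\mr{\rho}$ and $\xi\mapsto\delta_\xi\mr{\sigma}$ (which holds because $\mr{\rho}$, $\frac{d\mr{p}}{d\mr{\rho}}$ and $\mr{\rho}\,\mr{l}$ are radial and positive) and the identification of the $SO(3)$-invariant trace-free symmetric tensors with multiples of $n_i n_j-\tfrac13\delta_{ij}$ by radial functions.
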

\begin{remark}
The proof of proposition \ref{prop:nonsph} applies more generally in the case when the quadratic form, defined by \eqref{abc-new} with the elastic terms set to zero, is positive definite.
\end{remark}
\begin{proof}
We first show that with the given $\delta\mr{\rho}$, the right hand side of \eqref{finally}, which we shall denote by $H_i$, is non-zero.
Assume for a contradiction, that $H_i$ is zero.

Since $\delta\mr{\rho}$ has only $l \geq 2$ modes, it holds that
$$
\int_{\mathcal{B}} \delta \mr{\rho}\, dX = 0
$$
and hence there is a vector field $\eta^i$ such that
$\delta_\eta \mr{\rho} = \delta \mr{\rho}$ and $\eta^i n_i|_{\partial \mathcal{B}} = 0$.

Inserting $\delta_\eta\mr{\rho}$ into the right hand side of equation \eqref{finally}, and taking into account that $\delta\mr{\rho} |_{\delta\Bo} = 0$, we see that $H_i$ is identical to $L_i[\eta]$ given by \eqref{id7} for the particular case when the elastic term is absent, i.e., with $\mr{l} = 0$.

It follows from $\delta\mr{\rho} {|}_{\partial\Bo} = 0$ that the vector field $\eta^i$ satisfies
$\partial_i \eta^i|_{\partial \mathcal{B}} = 0$.
Hence we have that
$\eta$ is in the null space of the operator $L_i$ given by \eqref{id7} and satisfies the boundary condition \eqref{also}, both in the case $\mr{l} = 0$. Then the proof of proposition \ref{prop:nonradial} in the case where $\mr{l}
= 0$ shows that either $\eta$ is a translation Killing vector or $\delta_\eta \mr{\rho} = 0$. If $\delta_\eta\mr{\rho}$ we have a contradition, and hence $\eta$ is a translation. However, in this case, $\delta \mr{\rho}$ has only $l=1$ components, and it follows that $\delta \mr{\rho} = 0$, which gives a contradition.  Thus we have proved that
$H_i$ is non-zero.

Further, by construction, $H_i$ has only $l \geq 2$ components. To make concrete what this means
for a (co-)vector field, recall that any covector $\kappa_i$ can be written in the form
$$
\kappa_i = a n_i + (\delta_i{}^j - n_i n^j)\partial_j b + \epsilon_i{}^{jk}n_j \partial_k c
$$
where $a, b, c$ are scalar fields on $[0,R_0] \times \mathbb{S}^2$ with
$a$ unique, and $b,c$ unique up to constants. The triple of scalars $(a,b,c)$  corresponding to covector $H_i$ has non-zero $l \geq 2$ components in its spherical harmonics expansion. Due to the
stability condition \eqref{eq:gamma-stab},
the boundary value problem given \eqref{eq:final-syst}
has a solution $\xi^i$, which is unique up to Killing vectors.

Recall that Killing vectors have only $l \geq 1$ components. It follows that we can set the $l\leq 1$ components of $\xi^i$ to zero, and get a solution which we also denote $\xi^i$ to \eqref{eq:final-syst}, such that $\xi^i$ has only $l \geq 2$ components.

Finally we show that the the perturbed Eulerian stress tensor \eqref{2ndeuler} is not spherically symmetric. To do this we note that if it were spherically symmetric, then its tracefree part would
be of the form
\begin{equation}\label{eq:tracefree}
- 4 \mr{\rho}\,\mr{l}\, (\delta_\xi \mr{\sigma})_{ij} = A (\delta_{ij} -
3 n_i n_j)
\end{equation}
for some function $A$ depending only on $R$. Since by construction $\xi^i$ is non-zero and has only $l \geq 2$ modes, equation \eqref{eq:tracefree} can have a solution only if $A=0$ and $\xi^i$ is a conformal Killing vector. But conformal Killing vectors have only $l\leq 1$ modes, which is a contradiction.  Therefore, the perturbed Eulerian stress is not spherically symmetric, which completes the proof.
\end{proof}
We finally point out that it would be interesting to have information about the spherical behavior of $\xi^i n_i|_{\partial \mathcal{B}}$, since this describes the shape of mountains to order $\lambda$. However, this would require a detailed analysis of the boundary value problem given by \eqref{eq:final-syst},
which we defer to later work.
\subsection*{Acknowledgements} This material is based in part on work supported
by the National Science
Foundation under Grant No. 0932078 000, while LA and RB were in residence at
the Mathematical Sciences Research Institute in Berkeley, California, during
the semester programme of Mathematical General Relativity, during the fall of 2013. Further, LA and RB thank ESI for hospitality and support during part of this work. BS is grateful to J\"org Frauendiener for discussions in the early stages of the present work.

\appendix

\renewcommand{\SS}{\mathcal S}
\newcommand{\DD}{\mathcal D}
\newcommand{\VV}{\mathcal V}
\newcommand{\Tr}{\mathop{{\rm Tr}}}

\section{Estimates for the Newtonian potential} \label{sec:newtest}
Here we prove some potential theoretic estimates which are used in our
analysis. We discuss here only estimates in the setting of
$L^2$ Sobolev spaces $H^s$. See \cite[Chapter 4]{taylor:PDE:I:MR2744150} for
background. The analogous results hold in the setting of Sobolev
spaces of $L^p$ type $W^{s,p}$.

Consider $\Re^n$, $n \geq 3$, with Cartesian coordinates $(x^i)$, and with the Euclidean
metric. Let $\Omega$ be a smooth, bounded domain in $\Re^n$,  with boundary
$\partial\Omega$. Let $n^i$ be the outward directed normal to $\partial
\Omega$. The trace of a function $f$ on $\partial\Omega$ is denoted
$\Tr_{\partial\Omega} f$.

Let $\omega_n$ be the area of the unit sphere in $\Re^n$, and let
$E = -1/(\omega_n |x|^{n-2})$ be the fundamental solution of the Laplace
equation. The volume potential of a function $f$ is
$$
\VV[f](x) = \int_\Omega E(x-x') f(x') dx' ,
$$
and the layer potential $\SS[f]$ is
$$
\SS[f](x) = \int_{\partial\Omega} E(x-x') f(x') dS(x') ,
$$
where $dS$ is the induced volume element on $\partial\Omega$.
We shall need the following standard result, cf.
\cite[Chapter 7, Proposition 11.2]{taylor:PDE:II:MR1395149}. (See also
\cite[Chapter 7, Proposition 11.5]{taylor:PDE:II:MR1395149}
The assumption that the complement of $\Omega$ is connected
  made in \cite[Chapter 7, Proposition 11.5]{taylor:PDE:II:MR1395149} is
  not relevant for the continuity property which we need here.)

\begin{lemma} \label{lem:SS}
$\SS$
defines a bounded operator $\SS: H^{s-1}(\partial\Omega) \to
  H^s(\partial \Omega)$.
\end{lemma}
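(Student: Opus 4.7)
The plan is to recognize $\SS$ as a classical pseudodifferential operator of order $-1$ on the compact manifold $\partial\Omega$, after which the claim follows from the standard mapping properties of such operators. Since the result stated in the lemma is exactly the content of the cited \cite[Chapter 7, Proposition 11.2]{taylor:PDE:II:MR1395149}, the quickest route is to invoke that reference directly; here I sketch how one would carry out a self-contained argument.

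First, I would localize via a finite partition of unity $\{\chi_\alpha\}$ subordinate to a cover of $\partial\Omega$ by coordinate patches in which the boundary is the graph $\{(x',\varphi_\alpha(x')) : x' \in U_\alpha \subset \Re^{n-1}\}$ of a smooth function. The operator $\SS$ then decomposes into a finite sum of pieces whose integral kernels, after pull-back to $\Re^{n-1}$, take the form
\[
K_{\alpha\beta}(x',y') = -\frac{\chi_\alpha(P_\alpha(x'))\,\chi_\beta(P_\beta(y'))\,\sqrt{g_\beta(y')}}{\omega_n\,|P_\alpha(x') - P_\beta(y')|^{n-2}},
\]
with $P_\alpha(x') = (x',\varphi_\alpha(x'))$. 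When the supports of $\chi_\alpha$ and $\chi_\beta$ are disjoint on $\partial\Omega$, $K_{\alpha\beta}$ is smooth, so the corresponding operator is smoothing and hence bounded between any Sobolev spaces. It therefore suffices to analyze the diagonal blocks $\alpha = \beta$.

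Second, for the diagonal blocks I would use the Taylor expansion
\[
|P_\alpha(x') - P_\alpha(y')|^2 = g_{ij}(x')(x'-y')^i(x'-y')^j + O(|x'-y'|^3),
\]
where $g_{ij}$ denotes the induced metric on $\partial\Omega$ expressed in the chart. After a smooth change of variables this shows that the kernel agrees, modulo smoother terms, with a constant multiple of the geodesic distance raised to $-(n-2)$. Up to multiplicative constants, the leading contribution is the kernel of $(-\Delta_{\partial\Omega})^{-1/2}$ with respect to the induced metric, a classical pseudodifferential operator of order $-1$ with principal symbol $c_n |\xi|_g^{-1}$; the remainder produces operators of order $\leq -2$. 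This identifies $\SS$ as a classical pseudodifferential operator of order $-1$ on $\partial\Omega$.

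Third, I would apply the standard theorem that any classical pseudodifferential operator of order $m$ on a compact manifold $M$ defines a bounded linear map $H^s(M) \to H^{s-m}(M)$ for every $s \in \Re$. With $m = -1$ this yields the claimed continuity $\SS : H^{s-1}(\partial\Omega) \to H^s(\partial\Omega)$. The main obstacle in a self-contained presentation is the careful principal-symbol calculation and the control of the subprincipal remainder, which amounts to verifying that the correction terms in the Taylor expansion contribute kernels of order at most $-2$. Since for the present application these verifications are standard, the cleanest route is to cite \cite[Chapter 7, Proposition 11.2]{taylor:PDE:II:MR1395149} directly, as the authors indeed do.
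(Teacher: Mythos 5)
Your proposal matches the paper exactly: the paper offers no proof of Lemma \ref{lem:SS} beyond citing \cite[Chapter 7, Proposition 11.2]{taylor:PDE:II:MR1395149}, which is precisely the reference you invoke. Your additional sketch (localization, identification of $\SS$ as a classical pseudodifferential operator of order $-1$ with principal symbol $c_n|\xi|_g^{-1}$, then the standard mapping property on the compact manifold $\partial\Omega$) is the standard argument behind that cited result and is sound.
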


Let $\chi_\Omega$ denote the indicator function of $\Omega$, i.e.
$$
\chi_{\Omega}(x) = \left\{ \begin{array}{ll} 1 & x\in \Omega , \\
0 & x \notin \Omega,  \end{array} \right.
$$
and let $\delta_{\partial\Omega}$ be the delta-distribution supported on
$\partial\Omega$. Let $f$ be sufficiently regular so that the trace
$\Tr_{\partial\Omega} f$ is defined. Then the following identity is valid in
the sense of distributions.
\begin{equation}\label{eq:dchi}
\partial_i (f \chi_\Omega) = (\partial_i f) \chi_{\Omega} - \Tr[f n^i]
\delta_{\partial \Omega} .
\end{equation}
To see this, let $\phi \in C^\infty_0(\Re^n)$ and let $\xi^i$ be a vector
field on $\Re^n$. Then we
have
\begin{align*}
\int_{\Re^n} \phi \xi^i \partial_i \chi_\Omega dx &=
\int_{\Re^n} \phi \partial_i (\xi^i \chi_\Omega) -  \phi (\partial_i
\xi^i) \chi_{\Omega} dx \\
&= - \int_{\Re^n} \partial_i \phi \xi^i \chi_{\Omega} - \phi(\partial_i
\xi^i) \chi_{\Omega} dx \\
&= - \int_{\partial_\Omega} \phi n^i dS - \int_{\Re^n} \phi (\partial_i
\xi^i) \chi_{\Omega} dx .
\end{align*}
Specializing to the case $\xi^k \partial_k = \partial_i$ gives \eqref{eq:dchi}.
From \eqref{eq:dchi} we have immediately, by the chain rule and the
differentiation formula for convolutions,
\begin{equation}\label{eq:dVV}
\partial_i \VV[f] = \VV[\partial_i f] - \SS[f n^i] .
\end{equation}
We can now prove the following.
\begin{lemma} \label{lem:VV}
Let $s \geq 0$, $s$ integer.  
Then
$\VV$ defines a continuous map
$$
\VV: H^s(\Omega) \to H^{s+2}(\Omega).
$$
\end{lemma}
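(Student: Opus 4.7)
The plan is to prove the lemma by induction on $s$, using the identity \eqref{eq:dVV} as the bootstrapping mechanism.

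For the base case $s = 0$, I would extend $f \in L^2(\Omega)$ by zero to obtain $\tilde f := f \chi_\Omega \in L^2(\Re^n)$ with $\|\tilde f\|_{L^2(\Re^n)} = \|f\|_{L^2(\Omega)}$. Setting $u = \VV[f] = E * \tilde f$, one has $\Delta u = \tilde f$ on $\Re^n$ and $\hat u(\xi) = -|\xi|^{-2}\hat{\tilde f}(\xi)$. The multipliers $\xi_i \xi_j/|\xi|^2$ associated with $\partial_i \partial_j u$ are bounded, so Plancherel's theorem gives $\|\partial_i \partial_j u\|_{L^2(\Re^n)} \leq \|f\|_{L^2(\Omega)}$. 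The zeroth and first order derivatives of $u$ on the bounded set $\Omega$ can be controlled by Young's inequality against the locally integrable kernels $|x|^{2-n}$ and $|x|^{1-n}$. Together these yield $\VV[f] \in H^2(\Omega)$ with the required continuity estimate.

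For the inductive step, assume the statement holds for some $s \geq 0$ and let $f \in H^{s+1}(\Omega)$. Applied componentwise, the identity \eqref{eq:dVV} reads
\[
\partial_i \VV[f] = \VV[\partial_i f] - \SS[(\Tr_{\partial\Omega} f)\, n^i] \quad \text{on } \Re^n.
\]
The first term lies in $H^{s+2}(\Omega)$ by the induction hypothesis applied to $\partial_i f \in H^s(\Omega)$. For the second, the trace theorem places $(\Tr f)\, n^i$ in $H^{s+1/2}(\partial\Omega)$, and Lemma \ref{lem:SS} then gives $\Tr_{\partial\Omega}\SS[(\Tr f)\, n^i] \in H^{s+3/2}(\partial\Omega)$. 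Because $\SS[\,\cdot\,]$ is harmonic in $\Omega$ (the kernel $E(x-x')$ being harmonic in $x \in \Omega$ for $x' \in \partial\Omega$), standard elliptic regularity for the interior Dirichlet problem upgrades this boundary regularity to $\SS[(\Tr f)\, n^i]|_\Omega \in H^{s+2}(\Omega)$. Combining, $\partial_i \VV[f] \in H^{s+2}(\Omega)$ for each $i$, and together with the fact that $\VV[f] \in L^2(\Omega)$ from the base case this delivers $\VV[f] \in H^{s+3}(\Omega)$, closing the induction.

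The main obstacle is the base case, which rests on the $L^2$-boundedness of second-order Riesz transforms (Calder\'on--Zygmund theory). The inductive step is essentially bookkeeping, pairing the one-derivative boundary gain of Lemma \ref{lem:SS} with the classical $H^{s+3/2}(\partial\Omega) \to H^{s+2}(\Omega)$ regularity for the interior Dirichlet problem on a smooth bounded domain.
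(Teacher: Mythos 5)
Your proof is correct and follows essentially the same route as the paper: induction on $s$ driven by the identity \eqref{eq:dVV}, with the trace theorem, Lemma \ref{lem:SS}, and elliptic regularity for the harmonic layer potential supplying the inductive step. The only difference is cosmetic — you prove the base case directly via the Fourier multiplier bound for second derivatives plus Young's inequality on the bounded domain, where the paper simply cites the standard interior estimate for the Poisson equation.
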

\begin{proof}
The case $s=0$ follows from the standard interior estimate for the Poisson
equation, cf. \cite[Chapitre 2, Th\'eor\`eme 3.1]{lions:magenes:vol1}.
The proof now proceeds by induction, with $s=0$ as base.
Let $s \geq 1$ and suppose we have proved
the statement for $s-1$. Let $f \in H^s(\Omega)$. By the trace theorem,
\cite[Chapter 4, Proposition 4.5]{taylor:PDE:I:MR2744150},
$\Tr_{\partial\Omega}[fn^i] \in H^{s-1/2}(\partial\Omega)$ and by lemma
\ref{lem:SS}, $\SS[f n^i] \in H^{s+1/2}(\partial\Omega)$. Now, $\SS[f n^i]$
is harmonic in $\Omega$ with trace on $\partial\Omega$ in
$H^{s+1/2}$. It follows that $\SS[f n^i] \in H^{s+1}(\Omega)$.
Further, by the induction hypothesis, $\VV[\partial_i f] \in
H^{s+1}(\Omega)$, and hence $\partial_i\VV[f] \in H^{s+1}(\Omega)$. Again by
the induction hypothesis, $\VV[f] \in H^{s+1}(\Omega)$, and hence it follows
that $\VV[f] \in H^{s+2}(\Omega)$, which closes the induction and gives the result.
\end{proof}


\begin{thebibliography}{10}

\bibitem{aly}
Jean-Jacques Aly and J{\'e}r{\^o}me P{\'e}rez, \emph{On the stability of a
  gaseous sphere against nonradial perturbations}, Monthly Notices Roy.
  Astronom. Soc. \textbf{259} (1992), no.~1, 95--103. \MR{1187889 (93i:85002)}

\bibitem{AOS:2011CQGra..28w5006A}
L.~{Andersson}, T.~A. {Oliynyk}, and B.~G. {Schmidt}, \emph{{Dynamical elastic
  bodies in Newtonian gravity}}, Classical and Quantum Gravity \textbf{28}
  (2011), no.~23, 235006.

\bibitem{ABS}
Lars Andersson, Robert Beig, and Bernd Schmidt, \emph{Static self-gravitating
  elastic bodies in einstein gravity}, Communications on Pure and Applied
  Mathematics \textbf{61} (2008), no.~7, 988--1023.

\bibitem{beig:schmidt:relativistic:2003CQGra..20..889B}
R.~{Beig} and B.~G. {Schmidt}, \emph{{Relativistic elasticity}}, Classical and
  Quantum Gravity \textbf{20} (2003), 889--904.

\bibitem{beig:schmidt:2003}
Robert Beig and Bernd~G. Schmidt, \emph{{Static, Self-Gravitating Elastic
  Bodies}}, Proc. Roy. Soc. Lond. \textbf{A459} (2003), 109--115.

\bibitem{B&T:1987gady.book.....B}
James Binney and Scott Tremaine, \emph{{Galactic dynamics}}, Princeton series
  in astrophysics, Princeton Univ. Press, Princeton, NJ, 1987.

\bibitem{Binney:new}
\bysame, \emph{{Galactic dynamics}}, Princeton series in astrophysics,
  Princeton Univ. Press, Princeton, NJ, 2008.

\bibitem{calogero:leonori:2012arXiv1208.1792C}
S.~{Calogero} and T.~{Leonori}, \emph{{Ground states of self-gravitating
  elastic bodies}}, ArXiv e-prints (2012).

\bibitem{crusts}
Nicolas Chamel and Pawel Haensel, \emph{Physics of neutron star crusts}, Living
  Reviews in Relativity \textbf{11} (2008), no.~10.

\bibitem{stellar}
S.~Chandrasekhar, \emph{An introduction to the study of stellar structure},
  Dover Publications Inc., New York, N. Y., 1957.

\bibitem{conway:operatortheory:MR1721402}
John~B. Conway, \emph{A course in operator theory}, Graduate Studies in
  Mathematics, vol.~21, American Mathematical Society, Providence, RI, 2000.

\bibitem{KM}
Jerzy Kijowski and Giulio Magli, \emph{Unconstrained variational principle and
  canonical structure for relativistic elasticity}, Rep. Math. Phys.
  \textbf{39} (1997), no.~1, 99--112.

\bibitem{lichtenstein}
Leon Lichtenstein, \emph{Gleichgewichtsfiguren rotierender fl{\"u}ssigkeiten},
  Springer, Berlin, 1933.

\bibitem{lions:magenes:vol1}
J.-L. Lions and E.~Magenes, \emph{Probl\`emes aux limites non homog\`enes et
  applications. {V}ol. 1}, Travaux et Recherches Math\'ematiques, No. 17,
  Dunod, Paris, 1968.

\bibitem{hughes:marsden}
Jerrold~E. Marsden and Thomas J.~R. Hughes, \emph{Mathematical foundations of
  elasticity}, Dover Publications Inc., New York, 1994, Corrected reprint of
  the 1983 original.

\bibitem{moser}
J{\"u}rgen Moser, \emph{On the volume elements on a manifold}, Trans. Amer.
  Math. Soc. \textbf{120} (1965), 286--294.

\bibitem{rendall:schmidt:1991CQGra...8..985R}
A.~D. {Rendall} and B.~G. {Schmidt}, \emph{{Existence and properties of
  spherically symmetric static fluid bodies with a given equation of state}},
  Classical and Quantum Gravity \textbf{8} (1991), 985--1000.

\bibitem{shadi:elast}
A.~Shadi Tahvildar-Zadeh, \emph{Relativistic and nonrelativistic elastodynamics
  with small shear strains}, Ann. Inst. H. Poincar\'e Phys. Th\'eor.
  \textbf{69} (1998), no.~3, 275--307.

\bibitem{taylor:PDE:II:MR1395149}
Michael~E. Taylor, \emph{Partial differential equations. {II}}, Applied
  Mathematical Sciences, vol. 116, Springer-Verlag, New York, 1996, Qualitative
  studies of linear equations.

\bibitem{taylor:PDE:I:MR2744150}
\bysame, \emph{Partial differential equations {I}. {B}asic theory}, second ed.,
  Applied Mathematical Sciences, vol. 115, Springer, New York, 2011.

\end{thebibliography}

\providecommand{\bysame}{\leavevmode\hbox to3em{\hrulefill}\thinspace}
\providecommand{\MR}{\relax\ifhmode\unskip\space\fi MR }
\providecommand{\MRhref}[2]{%
  \href{http://www.ams.org/mathscinet-getitem?mr=#1}{#2}
}
\providecommand{\href}[2]{#2}

\end{document}